\newtheorem{definition}{Definition}
\newtheorem{theorem}{Theorem}
\newtheorem{proof}{Proof}
\newcommand{\PSNR}{{\sf PSNR}}
\newcommand{\Es}{\mathbb{E}_s}
\long\def\comment#1{}
\newcommand{\Ea}{\stackrel{(a)}{=}}
\newcommand{\Eb}{\stackrel{(b)}{=}}
\newcommand{\Ecc}{\stackrel{(c)}{=}}
\newcommand{\Ed}{\stackrel{(d)}{=}}
\newcommand{\Ee}{\stackrel{(e)}{=}}
\newcommand{\Ef}{\stackrel{(f)}{=}}
\newcommand{\Eg}{\stackrel{(g)}{=}}
\newcommand{\Eh}{\stackrel{(h)}{=}}
\newcommand{\prodc}{\stackrel{\cdot}{\prod}}
\newfont{\bbb}{msbm10 scaled 700}
\newfont{\bb}{msbm10 scaled 1100}
\newcommand{\CC}{\mbox{\bb C}}
\newcommand{\PP}{\mbox{\bb P}}
\newcommand{\RR}{\mbox{\bb R}}
\newcommand{\ZZ}{\mbox{\bb Z}}
\newcommand{\EE}{\mbox{\bb E}}
\newcommand{\av}{{\bf a}}
\newcommand{\bv}{{\bf b}}
\newcommand{\cv}{{\bf c}}
\newcommand{\dv}{{\bf d}}
\newcommand{\ev}{{\bf e}}
\newcommand{\fv}{{\bf f}}
\newcommand{\mv}{{\bf m}}
\newcommand{\rv}{{\bf r}}
\newcommand{\sv}{{\bf s}}
\newcommand{\tv}{{\bf t}}
\newcommand{\uv}{{\bf u}}
\newcommand{\vv}{{\bf v}}
\newcommand{\xv}{{\bf x}}
\newcommand{\yv}{{\bf y}}
\newcommand{\zv}{{\bf z}}
\newcommand{\zerov}{{\bf 0}}
\newcommand{\onev}{{\bf 1}}
\newcommand{\Fm}{{\bf F}}
\newcommand{\Gm}{{\bf G}}
\newcommand{\Hm}{{\bf H}}
\newcommand{\Id}{{\bf I}}
\newcommand{\Lm}{{\bf L}}
\newcommand{\Pm}{{\bf P}}
\newcommand{\Sm}{{\bf S}}
\newcommand{\Cc}{{\cal C}}
\newcommand{\Ec}{{\cal E}}
\newcommand{\Gc}{{\cal G}}
\newcommand{\Kc}{{\cal K}}
\newcommand{\Lc}{{\cal L}}
\newcommand{\Nc}{{\cal N}}
\newcommand{\Qc}{{\cal Q}}
\newcommand{\Rc}{{\cal R}}
\newcommand{\Uc}{{\cal U}}
\newcommand{\Xc}{{\cal X}}
\newcommand{\Yc}{{\cal Y}}
\newcommand{\hu}{{\underline h}}
\newcommand{\gu}{{\underline g}}
\newcommand{\N}{{\mathrm N}}
\newcommand{\deltav}{\hbox{\boldmath$\delta$}}
\newcommand{\eqdef}{\stackrel{\Delta}{=}}
\newcommand{\GF}{{\sf GF}}
\newcommand{\LLR}{{\sf LLR}}
\newcommand{\x}{{\sf x}}
\newcommand{\y}{{\sf y}}
\newcommand{\p}{{\sf p}}
\newcommand{\rr}{{\sf r}}
\newcommand{\X}{{\sf X}}
\newcommand{\Y}{{\sf Y}}
\begin{document}

\title{Joint Source-Channel Coding for Deep-Space Image Transmission using Rateless Codes}

\author{{\normalsize O. Y. Bursalioglu, G. Caire, and D. Divsalar}}
\maketitle
{\let\thefootnote\relax\footnotetext{
\noindent
O. Y. Bursalioglu is with Docomo Innovations Inc., Palo Alto, CA. G. Caire are with the Ming Hsieh Dept. of Electrical Engineering, University of Southern
California, Los Angeles, CA. D. Divsalar is with Jet Propulsion Laboratory, California Institute of Technology, Pasadena, CA.

\thanks{This research in part was carried out at the Jet Propulsion Laboratory, California Institute of Technology, under a contract with NASA. The work by the University of Southern California, and JPL was funded through the NASA/JPL/DRDF/SURP Program.}
}}

\begin{abstract}
A new coding scheme for image transmission over noisy channel is proposed.
Similar to standard image compression, the scheme includes a linear transform followed by successive refinement
scalar quantization.
Unlike conventional schemes, in the proposed system the quantized transform coefficients are linearly mapped
into channel symbols using systematic linear encoders. This fixed-to-fixed length ``linear index coding'' approach avoids
the use of an explicit entropy coding stage (e.g., arithmetic or Huffman coding), which is typically fragile
to channel post-decoding residual errors.  We use linear codes over $\GF(4)$, which are particularly suited for this application,
since they are matched to the dead-zone quantizer symbol alphabet and to the QPSK modulation used
on the deep-space communication channel. We optimize the proposed system where the linear codes are
systematic Raptor codes over $\GF(4)$. The rateless property of Raptor encoders allows to achieve a ``continuum'' of coding rates,
in order to accurately match the channel coding rate to the transmission channel capacity and to the
quantized source entropy rate for each transform subband and refinement level.
Comparisons are provided with respect to the concatenation of state-of-the-art image coding and channel coding
schemes used by Jet Propulsion Laboratories (JPL) for the Mars Exploration Rover (MER) Mission.

\end{abstract}

\newpage

\section{Introduction}\label{sec:introduction}

In conventional digital image transmission over noisy channels, the source coding and channel
coding stages are designed and operated separately. Image coding is usually implemented by a linear transformation (e.g., DCT, Wavelet),
followed by the transform coefficients quantization and by entropy coding of the resulting quantization bits.
Due to the lack of robustness of standard entropy coding schemes, a few bit errors after the channel decoder
may dramatically corrupt the decoded image. To prevent this catastrophic error propagation, the source is partitioned into segments,
such that the effect of errors is spatially confined. In order to preserve integrity, which is a strict requirement in
deep-space scientific missions, the segments affected by errors are retransmitted at the cost of significant
delay and power expenditure. Because of the sharp waterfall behavior of the Bit-Error Rate (BER) of the powerful channel coding
schemes used in deep-space communications, slight changes in the transmission channel quality (e.g., SNR fluctuations
due to atmospheric conditions or antenna misalignment) result in dramatic degradation of the post-decoding
BER, producing sequences of highly corrupted segments that need retransmission \cite{kiely-icer-report}.

In this paper we consider the application of the Joint Source Channel Coding (JSCC) scheme
developed in f{Ozgun-Maria-JSCC-08}, \cite{BSBC} to the specific problem of deep-space image transmission.
The proposed JSCC scheme consists of a successive refinement (also referred to as ``embedded'') quantizer,
and a family of linear codes that directly map the sequences of quantization symbols generated at each refinement
level into channel codewords. This approach is referred to as {\em Quantization with Linear Index Coding}
(QLIC). The linear mapping of the redundant quantization symbols into channel-encoded symbols replaces the non-linear entropy coding stage of conventional
source encoders. QLIC can achieve the same (optimal) entropy compression rate of conventional entropy encoders,
but it is much better conditioned in terms of residual error propagation. Similar to JPEG2000 \cite{TaMa}, we apply Discrete Wavelet Transform to the
image and then quantize the transform coefficients using a dead-zone quantizer. Since an embedded dead-zone quantizer divides
the quantization cells into at most three regions at every refinement level, the quantization indices are naturally represented as non-binary symbols.
Differently from our previous work in \cite{Ozgun-Maria-JSCC-08}, \cite{BSBC},  here we use nonbinary Raptor codes
(notably, over $\GF(4)$) for QLIC. We prove an ``isomorphism'' between the original source-channel coding problem and a ``virtual''
purely channel coding problem where the source symbols are sent through an appropriate discrete symmetric memoryless channel
over $\GF(4)$ and the channel-coded symbols are sent through the AWGN channel with QPSK modulation, which is the standard
in deep-space communications. This isomorphism allows us to cast the non-standard code optimization in the source-channel
coding case as  a more familiar optimization for the purely channel coding case,
which we solve by using a modified EXIT chart technique~\cite{EtSho06}.

The three components the proposed JSCC scheme, namely a wavelet transform, a scalar embedded quantizer,
and a linear encoding stage, are examined in Sections \ref{sec:wavelet}, \ref{sec:quantizer} and \ref{sec:code-design}, respectively.
Section \ref{sec:system-setup} introduces the notation used throughout the paper and defines the relevant system
optimization problem for JSCC based on the concatenation of embedded quantization and channel coding in general.
In Sec. \ref{sec:Results}\footnote{These results appeared previously in \cite{JPL-ITA} as a conference proceeding. This work includes more details about the scheme and derivations. }, we compare the performance of the proposed scheme with the state-of-the art image transmission scheme
for deep-space communication channel. This {\em baseline scheme} is based on the separation of source compression and channel coding.
Our results show that when the channel quality is perfectly known, the highly optimized baseline scheme
provides {\em slightly} higher efficiency. However, as soon as the channel conditions degrade, the proposed JSCC scheme
offers significant {\em robustness} advantages. In particular, it is able to handle fluctuations of the channel SNR as large as 1 dB below its nominal value,
with visually acceptable quality and without requiring retransmissions.

\section{System Setup}\label{sec:system-setup}

The deep-space transmission channel is represented by the
discrete-time complex baseband equivalent model
\begin{equation} \label{awgn}
y_t = \mu(x_t) + z_t, \;\;\; t = 1,2,\ldots,
\end{equation}
where $y_t\in\CC$, $x_t \in \GF(q)$ is a coded symbol taking on values in a finite field,
$\mu : \GF(q) \rightarrow \mathfrak{X}$ is a {\em labeling map} of a signal constellation
$\mathfrak{X} = \{\Xc_0,\ldots, \Xc_{q-1}\}$ with the elements of $\GF(q)$ and
$z_t \sim \Cc\Nc(0,N_0)$ is the complex circularly symmetric AWGN.
The channel signal-to-noise ratio (SNR) is given by $E_s/N_0$, where $E_s = \frac{1}{q} \sum_{j=0}^{q-1} |\Xc_j|^2$ is the average power of the
signal constellation. We indicate by $C_{\mathfrak{X}}(E_s/N_0)$ the maximum achievable rate of channel (\ref{awgn}) when the input
$x_t$ is i.i.d. and uniformly distributed over $\GF(q)$.~\footnote{We shall refer to $C_{\mathfrak{X}}(E_s/N_0)$ as ``channel capacity'' even though, for general
constellations, the uniform input probability may not be capacity achieving. As a matter of fact, for the case of QPSK considered in the rest of the paper
the uniform input probability does achieve capacity.}

A source block of length $K$ is denoted by  $\Sm \in \RR^{s\times K}$, where
$\Sm(i,:) = (S(i,1), \ldots, S(i,K))$ is the $i$-th row of $\Sm$, with variance $\sigma_i^2 \eqdef \frac{1}{K} \EE[\|\Sm(i,:)\|^2]$,
is referred to as referred to as the $i$-th {\em source component}.
A $(s \times K)$-to-$N$ source-channel code for source $\Sm$ and channel (\ref{awgn})
is formed by an encoding function $\Sm \mapsto \xv = (x_1,\ldots,x_N)$, and by a
decoding function $\yv = (y_1, \ldots, y_N) \mapsto \widehat{\Sm}$.

Letting $d_{i} = \small{\frac{1}{K} } \EE[ \|\Sm(i,:) - \widehat{\Sm}(i,:) \|^2]$ denote the mean-square error
for the $i$-th component, the weighted mean-square error (WMSE) is defined by
\begin{equation} \label{eq:D_l_Dist_weighted}
D = \small{\frac{1}{s}} \sum_{i=1}^s v_i d_{i},
\end{equation}
where $\{v_i\}$ is a set of non-negative weights that depends on the specific application (see Section \ref{sec:wavelet}).
Let $r_i(\cdot)$ denote the rate-distortion (R-D) function of the $i^{\rm th}$ source component
with respect to the MSE distortion. Then the R-D function of $\Sm$ with respect to the WMSE distortion is given by
\begin{equation} \label{gen-rev-waterfilling}
\Rc(D) =  \min \; \frac{1}{s} \sum_{i=1}^s r_i(d_i),\;\; \mbox{subject to} \; \frac{1}{s} \sum_{i=1}^s v_i d_i = D,
\end{equation}
where the optimization is with respect to the values $d_i \geq 0$ for $i = 1,\ldots, s$.
For example, for parallel Gaussian sources and equal weights ($v_i = 1$ for all $i$),
(\ref{gen-rev-waterfilling}) yields the well-known ``reverse waterfilling'' formula (see \cite[Theorem 10.3.3]{CoTh}).
For a family of successive refinement source codes with R-D functions $r_i(d)$, $i = 1,\ldots, s$,
assumed to be convex, non-increasing \cite{ortega-ramchandran-RD}
and identically zero for $d > \sigma_i^2$, the operational R-D function of the source $\Sm$ is also given by (\ref{gen-rev-waterfilling}).
Therefore, in the following, $\Rc(D)$ is used to denote the actual operational
R-D function of for some specific, possibly suboptimal,  successive refinement source code.

We define the source-channel bandwidth efficiency of the encoder $\Sm \mapsto \xv$ as the ratio
$b = \frac{N}{sK}$, measured in channel uses per source sample. This corresponds to the familiar notion of ``bit per pixel''
in the case where the source symbols are pixels (image coding) and the channel is just a storage device for which one channel use
corresponds to storing one bit. By analogy, in this paper $b$ will expressed in ``symbol per pixel'' (spp).
It is immediate from the definition of R-D function that the minimum distortion $D$ achievable at channel capacity
$C_{\mathfrak{X}}(E_s/N_0)$ and source-channel bandwidth efficiency $b$ is given by $D = \Rc^{-1}(bC_{\mathfrak{X}}(E_s/N_0))$.
\section{Subband Coding}\label{sec:wavelet}

Images are decomposed into a set of source components
by a Discrete Wavelet Transform (DWT). In this work we make use of the DWT described in
JPEG2000  \cite{TaMa} for lossy compression. With $W$ levels of DWT, the transformed image is partitioned
into $3W+1$ ``subbands''. A subband decomposition example is given in Fig.~\ref{fig:subsources}-A for $W=3$.
This produces $3W+1 = 10$ subbands, which in the figure are indicated by
LL0, HL1, LH1, HH1, HL2, LH2, HH2, HL3, LH3, HH3, respectively.
The subbands have different lengths, all multiples of the LL0 subband length.
For simplicity, we partition the DWT into source components of the same length, all equal to the
the length of the LL0 subband. This yields $s = 2^{2W}$ source component blocks of
length $K = \Kc^2/s$, where $\Kc \times \Kc$ indicates the size of the original image in pixels.

Since this DWT is a bi-orthogonal transform, the MSE distortion in the pixel domain
is not equal to the MSE distortion in the wavelet domain.
In our case, for $W=3$, the weight of a source component block in subband $w = \{1,\ldots, 10\}$ is given by the
$w$-th coefficient of the vector $[l^6,\, l^5h,\, l^5h,\,l^4h^2,\, l^3h^3,\, l^3h^3,\,l^2h^2,\, lh,\,lh,\, h^2]$,
where, for the particular DWT considered (namely, the CDF 9/7~\cite{Daubechies} wavelet),
we have $l=1.96$ and $h=2.08$ \cite{TaMa}.
\begin{figure}[ht]
\centerline{(A) \includegraphics[width=5.0cm,height=5.0cm]{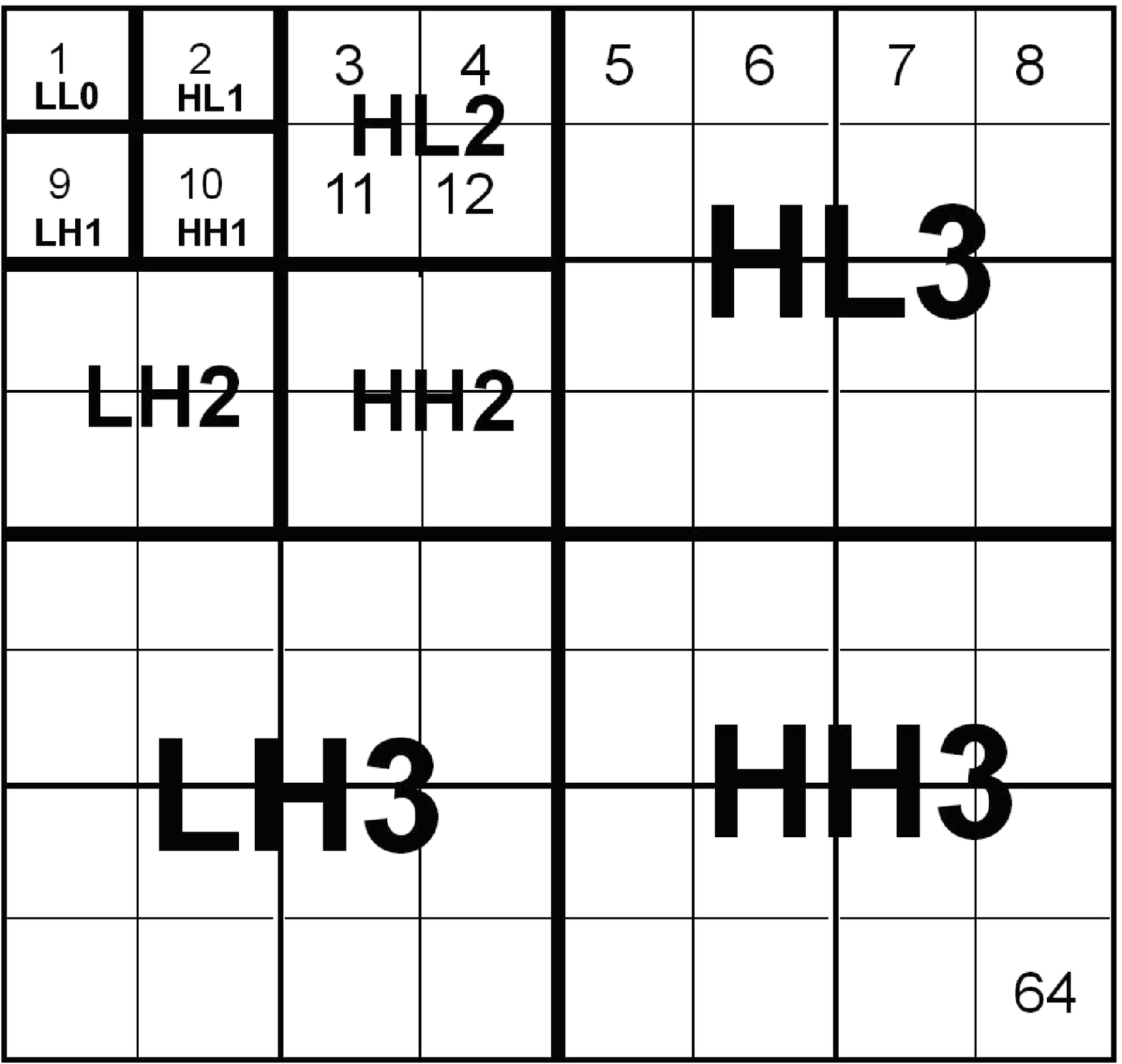} \hspace{1cm}
(B)\includegraphics[width=6cm,height=5.0cm]{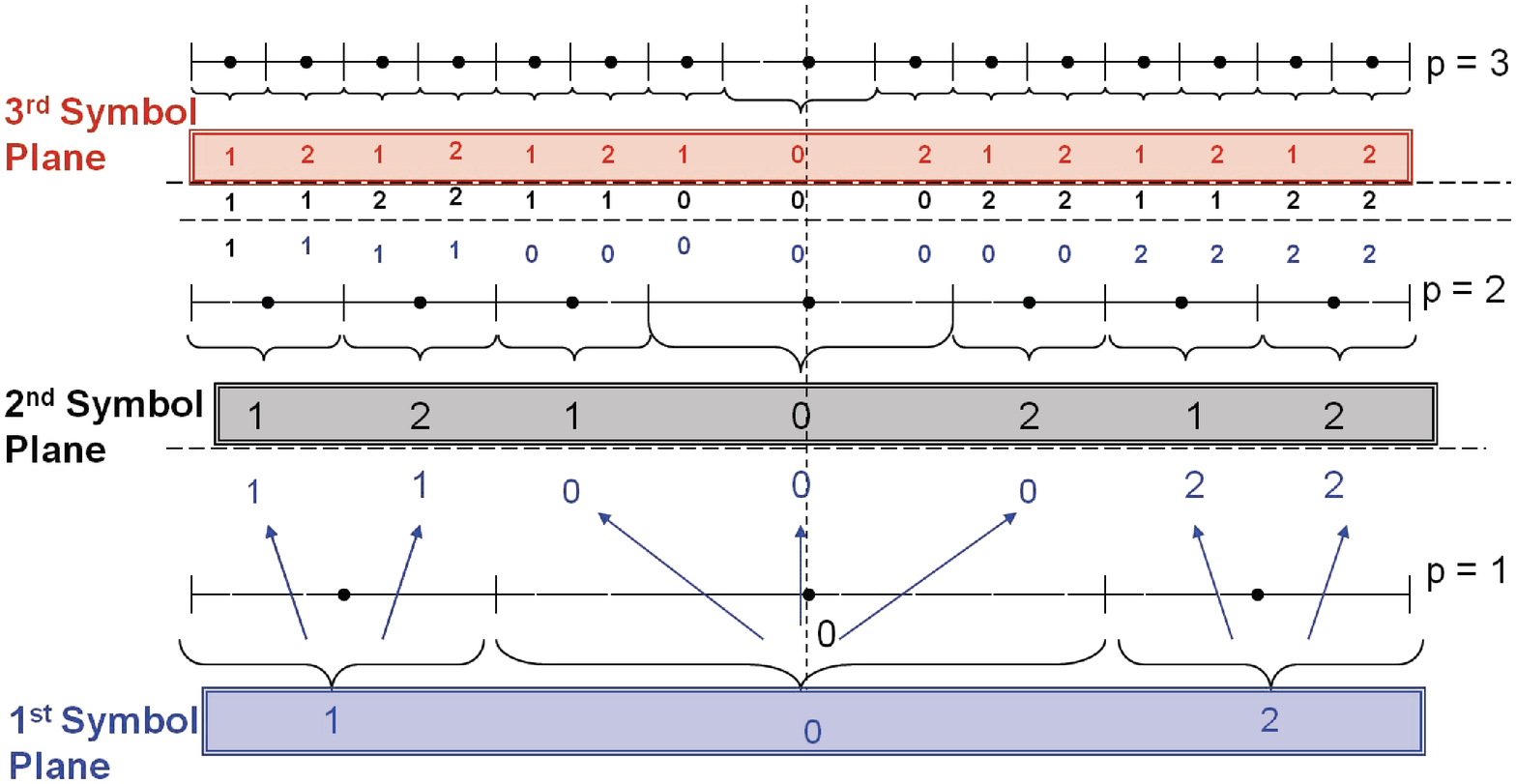}}
\caption{(A): $W = 3$, partitioning of an image into $10$ subbands and $64$ source components. (B): Quantization cell indexing for a
embedded dead-zone quantizer with $p=1,2,3$. }
\label{fig:subsources}
\end{figure}
The subband LL0 consists approximatetely of a decimated version
of the original image.
In order to obtain better compression in the transform domain,
a Discrete Cosine Transform (DCT) is applied to subband LL0 so that its energy
is ``packed'' into a very few coefficients The resulting few high-energy coefficients are separately encoded and transmitted as part of the header. This is highly protected by a sufficiently low rate channel code and is not discussed further in this work since it has a negligible contribution to the overall coding length.
After extracting these few high-energy coefficients,
all subbands show similar marginal statistics, well-suited for the
embedded dead-zone quantizer described in the next section.
\section{Embedded Scalar Quantization}\label{sec:quantizer}

The simplest form of quantization defined in JPEG2000 is a uniform scalar quantizer
where the center cell's width is twice the width of the other cells, at any resolution level.
For example, Fig.~\ref{fig:subsources}-B shows  such a quantizer with $3$ resolution levels.
This scheme, referred to as ``dead-zone'' quantizer, is adopted in this work.
We indicate the cell partition at every level by symbols $\{0,1,2\}$ as shown in Fig.\ref{fig:subsources}-B.
The scalar quantization function is denoted by $\Qc : \RR \rightarrow \{0,1,2\}^P$,
where $2^{P+1}-1$ is the number of quantization regions for the highest level of refinement.
Let $\uv^{(i)} = \Qc(\Sm(i,:))$ denote the block of ternary quantization indices, formatted as a $P \times K$ array.
The $p$-th row of $\uv^{(i)}$, denoted by $\uv^{(i)}(p,:)$, is referred to as the
$p$-th ``symbol-plane'', where $\uv^{(i)}(1,:)$ corresponds to the coarser refinement
and $\uv^{(i)}(P,:)$ to the finest. A refinement level $p$ consists of all symbol planes from 1 to $p$.
The quantization distortion for the $i$-th source component at refinement level $p$ is denoted  by $D_{\Qc}(i,p)$.


The quantizer output $\uv^{(i)}$ can be considered as a {\em discrete memoryless source},
with entropy rate $H^{(i)} = \frac{1}{K} H(\uv^{(i)})$ (in bits/source symbol).
The chain rule of entropy \cite{CoTh} yields $H^{(i)} = \sum_{p=1}^P H_p^{(i)}$, with
$
H_p^{(i)} = \frac{1}{K} H \left ( \left . \uv^{(i)}(p,:) \right | \uv^{(i)}(1,:),\ldots, \uv^{(i)}(p-1,:) \right ), \;\;\; p = 1,\ldots,P.
$
Then,  the set of R-D points achievable by the concatenation of the quantizer using
$0,1,\ldots, P$ quantization levels~\footnote{Notice: 0 quantization levels indicates that the whole source component is
reconstructed at its mean value.} and ideal entropy coding is given by
\begin{equation} \label{quant-achievable}
\left ( \sum_{j=1}^p H_j^{(i)} , D_{\Qc}(i,p) \right ), \;\;\; p = 0,\ldots,P,
\end{equation}
where, by definition, $D_{\Qc}(i,0) = \sigma_i^2$. Using time-sharing, any point in the convex hull of the above achievable
points is also achievable. Therefore, the operational R-D curve $r_i(d)$ of the scalar quantizer is
given by the {\em lower convex envelope} of the points in (\ref{quant-achievable}).

By construction, $r_i(d)$ is piecewise linear, convex and decreasing on the domain $D_{\Qc}(i,P) \leq d \leq \sigma_i^2$.
As such, it is possible to represent $r_i(d)$ as the pointwise maximum of the family of straight lines
joining the pairs of R-D points in (5), for consecutive indices $p$ and $p+1$.
Using this observation in (\ref{gen-rev-waterfilling}),  the minimum WMSE distortion
with capacity $C_{\mathfrak{X}}(E_s/N_0)$ and bandwidth efficiency $b$ is the result of the linear program:
\begin{eqnarray} \label{eq:MWTD-linear}
&&\mbox{minimize} \;\;  \frac{1}{s} \sum_{i=1}^s v_i d_{i} \\
&&\mbox{subject to} \;\; \frac{1}{s}\sum_{i=1}^s\gamma_i\leq bC_{\mathfrak{X}}(E_s/N_0);\;
 D_{\Qc}(i,P) \leq d_i \leq \sigma_i^2,\;\forall i;\; \gamma_i \geq a_{i,p} d_i + b_{i,p},\;\forall i,p,\nonumber
\end{eqnarray}
where $a_{i,p} d + b_{i,p}$ is the $p$-th straight line (for appropriate coefficients ($a_{i,p}, b_{i,p}$) obtained by linear interpolation of the
points in (\ref{quant-achievable})) forming $r_i(d)$ as said before.~\footnote{The details of linear interpolation are trivial and are omitted for the sake of brevity.}

While (\ref{eq:MWTD-linear}) assumes a capacity achieving channel code,
in the proposed JSCC scheme the refinement levels (symbol planes)
of each source component are encoded by actual codes of finite block length.
Letting $n_p^{(i)}$ denote the number of channel encoded symbols for the $p^{\rm th}$ plane of the $i^{\rm th}$ source
component, the total channel coding block length is given by $N = \sum_{i=1}^s \sum_{p=1}^P n_{p}^{(i)}$.
Consistent with the definition of the Raptor code {\em overhead} for channel coding applications \cite{EtSho06}, we define the
overhead $\theta_p^{(i)}$ for JSCC such that $n_p^{(i)} = \frac{K H_p^{(i)}(1+\theta_{p}^{(i)})}{C_{\mathfrak{X}}(E_s/N_0)},$
where $K H_p^{(i)}/C_{\mathfrak{X}}(E_s/N_0)$ is the information theoretic lower bound to the block length, obtained
from the source-channel coding converse theorem \cite[Theorem 8.13.1]{CoTh}.

In the case of a family of practical codes characterized by their overhead coefficients $\{\theta_p^{(i)}\}$, the
computation of the achievable R-D function takes on the same form of (\ref{eq:MWTD-linear}), where the
coefficients $\{a_{i,p}, b_{i,p}\}$ are obtained from the linear interpolation of the modified R-D points
\begin{equation} \label{quant-achievable-mod}
\left ( \sum_{j=1}^p H_j^{(i)}(1 + \theta_j^{(i)})  , D_{\Qc}(i,p) \right ), \;\;\; p = 0,\ldots,P.
\end{equation}
(see  \cite{BSBC} for details).  For given code families and block lengths, the overhead factors $\theta_p^{(i)}$ can be
experimentally determined, and used in the system optimization.

To give an idea of the symbol plane entropies resulting from deep-space images, in (\ref{table:entropy}) we give
such values for the first source component (subband LL0 after DCT) of a test image from the Mars Exploration Rover,
which will be referred to in the following as image MER1:
\begin{equation}\label{table:entropy}
 \left[H_{1}^{(1)},\ldots,H_{8}^{(1)} \right]=  \left[0.0562,\, 0.0825,\,0.2147,\,0.4453,\,0.8639,\,1.1872,\,1.1917,\,1.1118\right].
\end{equation}
By examining a large library of such images, we observed
that the range of values shown in (\ref{table:entropy}) are typical for this application.
\vspace{-0.5cm}
\section{Channel Coding Optimization}\label{sec:code-design}

In this section we discuss the optimization of the linear channel coding stage.
For simplicity, we focus on a single discrete source $\uv \in \GF(q)^K$
with entropy $H$, to be transmitted over the AWGN channel (\ref{awgn}) with capacity $C_{\mathfrak{X}}(E_s/N_0)$.
Obviously, the optimization procedure devised here can be applied to each source component and quantization
layer pair $(i,p)$, by letting $H = H_p^{(i)}$ and block length $n = n_p^{(i)}$.

Linear source codes are known to achieve the entropy rate of memoryless sources \cite{Csizar81}.
Several works have considered entropy-achieving fixed-to-fixed linear coding for ``almost-lossless'' data
compression \cite{Ancheta}, \cite{Massey}, \cite{CVS_itw03}, \cite{CVS_ldpc} and \cite{CVSS_fountain}.
Linear data compression codes can be directly obtained from linear error correcting codes
originally designed for additive-noise discrete memoryless channels. This is due to the following
fact \cite{CVS_ldpc}. Consider a linear fixed length data compression code given by a $K \times n$ matrix $\Hm$ which maps the source vector
$\uv$ (of length $K$) to the compressed vector $\cv = \uv\Hm$. The optimal {\em maximum a posteriori} (MAP) decoder
selects $\widehat{\uv}$ to be the most likely source vector satisfying $\widehat{\uv} \Hm = \cv$.
Next, consider a discrete additive noise channel $\yv = \xv + \uv$,
where the source $\uv$ acts as the additive noise. Let $\xv$ be a codeword of the a linear code
with parity-check matrix $\Hm$. The MAP decoder
in this case  computes the syndrome $\cv = \yv \Hm = \uv \Hm$ and finds $\widehat{\uv}$ to be the most likely noise realization
satisfying the syndrome equation $\widehat{\uv} \Hm = \cv$.  Then, it obtains the MAP decoded codeword as $\widehat{\xv} = \yv - \widehat{\uv}$.
It is clear that the optimal decoder for the data compression problem is identical to the optimal decoder for the channel coding problem. Therefore,
the achieved block error rates are identical. As a consequence, if $\Hm$ denotes a sequence (for increasing $K$)
of capacity achieving parity-check matrices for the discrete additive noise channel $\yv = \xv + \uv$, then the {\em same} sequence
of matrices achieves the entropy of the source $\uv$. In fact, in this case channel capacity and source entropy are related by $C = \log q - H$.

In order to extend the above argument from pure data compression to the transmission of compressed data over a noisy channel
it is sufficient to concatenate two linear encoding stages,  $\cv = \uv \Hm_{\rm comp}$ for data compression, and $\xv = \cv \Gm_{\rm cod}$
for channel coding.
Since the concatenation of two linear maps is a linear map,  by optimizing over all linear
maps (not necessarily decomposed as the product $\Hm_{\rm comp} \Gm_{\rm cod}$), it follows that there must exist good linear
joint source-channel codes. From now on, we shall indicate this single encoding map
by $\xv = \uv \Hm$. This can also be interpreted as {\em systematic encoding} followed by puncturing of the source symbols.
Encoding with the systematic generator matrix $\Gm = [\Id, \Hm]$ yields the systematic codeword $[\uv, \xv = \uv\Hm]$.
Then, the source symbols $\uv$ are completely punctured and only $\xv$ is transmitted. This approach is meaningful
from an information theoretic viewpoint since, in the limit of large
block length, any scheme transmitting the (redundant) source symbols directly over the channel is necessarily bounded away from
capacity. In fact, the source symbols are non-uniformly distributed with entropy $H < \log q$ and therefore
do not follow the capacity-achieving distribution.
Viewing the encoding map as systematic encoding followed by puncturing
will be instrumental to the proposed use of systematic Raptor codes for this problem, as discussed later on.

Up to this point we assumed that the noisy channel is also additive over $\GF(q)$, and therefore it is ``matched''
to the source alphabet, so that linearity can be defined.
However, in the case of deep-space transmission, the channel (\ref{awgn})
is defined over the complex field, and the codeword $\xv$ is mapped onto a sequence of modulation symbols by
the labeling map $\mu$. In order to carry over the previous arguments to this case we need a ``matching condition'' between the additive
group of the source alphabet $\GF(q)$ and an isometry group induced on the signal constellation.
For this purpose, we consider {\em geometrically uniform} constellations as defined in \cite{Forney-geo-uniform}.

\begin{definition} \label{geom-unif}
A signal set $\mathfrak{X}$ is called {\em geometrically uniform} if, given any two points $\Xc_a, \Xc_b \in \mathfrak{X}$,
there exists an isometry $w_{a,b} : \CC \rightarrow \CC$ that maps $\Xc_a$ into $\Xc_b$ while
leaving $\mathfrak{X}$ invariant.
\hfill $\lozenge$
\end{definition}

The set of all isometries that leave $\mathfrak{X}$ invariant forms the symmetry group of $\mathfrak{X}$, under the operation of
mapping composition. A subgroup $\Gc(\mathfrak{X})$ of the symmetry group
of minimal size able to generate the whole constellation $\mathfrak{X}$ as the orbit of any of its points is called a {\em generating group} \cite{Forney-geo-uniform}.
By definition, $|\Gc(\mathfrak{X})| = |\mathfrak{X}| = q$. Given an initial point $\Xc_0 \in \mathfrak{X}$, we have
$\mathfrak{X} = \{w(\Xc_0) : w\in\Gc(\mathfrak{X})\}$. This induces a one-to-one mapping  $\mu:\Gc(\mathfrak{X})\rightarrow\mathfrak{X}$ referred
to as an {\em isometric labeling}. The isometric labeling $\mu$ induces a group structure on $\mathfrak{X}$. At this point, the sought ``matching'' condition
can be stated as follows: we let  $\mathfrak{X}$ be a geometrically uniform signal constellation  admitting a generating group $\Gc(\mathfrak{X})$
isomorphic to the additive group of $\GF(q)$.

For example, a $q$-PSK signal constellation is geometrically uniform, and admits a generating group formed by the set of rotations
of multiples of $\frac{2\pi}{q}$.  This group is isomorphic to the additive group of $\ZZ_q$ (integers modulo $q$).
For $q$ prime, the ring $\ZZ_q$ coincides with the field $\GF(q)$, therefore the generating group of the $q$-PSK constellation
is isomorphic to the additive group of $\GF(q)$. A possible approach for our code design considers $q$ prime and uses $q$-PSK as
constellation. In particular, $q = 3$ is sufficient to represent the dead-zone quantizer symbols and the 3-PSK generating group
consists of the rotations $\Gc = \{I, R, R^2\}$ where $R$ is a $\pi/3$ rotation in $\CC$.

Another example is provided by $q = 4$. The additive group of $\GF(4)=\{(0,0), (1,0), (0,1), (1,1)\}$ (binary vectors of length 2, with modulo 2 addition)
is isomorphic to the additive group of  $\GF(2)\times \GF(2)$. This group is isomorphic to the isometry group formed by
$\Gc = \{I, R_x, R_y, R_{xy}\}$, where $I$ is identity, $R_x$ is reflection with respect to the real axis, $R_y$ is reflection with respect to the imaginary axis,
and $R_{xy} = R_x R_y$ is reflection with respect to the origin. Note that the isometric labeling of the 4-PSK constellation by the elements
of $\GF(4)$ coincides with the well-known Gray Mapping, which is routinely used in deep-space communications.

With the above conditions on $\mathfrak{X}$ and its isometric labeling $\mu$, we introduce the following notation:
$w_x \in \Gc(\mathfrak{X})$ denotes the isometry such that $w_x(\mu(0)) = \mu(x)$;
$w_{\xv}$ for a sequence $\xv \in \GF(q)^n$ denotes the sequence of isometries $w_{x_t}$ for $t = 1,\ldots, n$;
$\mu(\xv)$ indicates the sequence of constellation points $\mu(x_t)$ for $t = 1,\ldots, n$.

We wish to translate the non-conventional source-channel coding problem at hand into a channel coding problem defined
over a particular channel, that we refer to as the associated {\em two-block composite channel}.
We do so in order to reuse known techniques for optimizing the linear encoding
matrix $\Hm$ for the associated channel coding problem.

\begin{definition} \label{composite-channel}
The associated two-block composite channel is a $q$-ary input channel where the input is divided into two blocks,
indicated by $\vv$ and $\cv$, of length $K$ and $n$, respectively.
The first block is sent through the discrete additive noise channel defined by $\sv = \vv - \uv$, where operations are over $\GF(q)$
and where $\uv$ has the same statistics of the source. The second block is sent through the $q$-ary AWGN channel
defined by $\rv = \mu(\cv) + \zv$, where $\zv \sim \Cc\Nc(\zerov,N_0\Id)$, as in the original AWGN channel (\ref{awgn}).
\hfill $\lozenge$
\end{definition}

For the associated two-block composite channel, we consider the systematic encoder $[\vv, \cv = \vv\Hm]$. Then, we have:

\begin{theorem} \label{map-isomorphism}
The source-channel coding scheme with source $\uv$, linear encoder $\xv = \uv \Hm$, transmission over the noisy channel
$\yv = \mu(\xv) + \zv$, and MAP decoding, is {\em equivalent} to a channel coding scheme
over the associated two-block composite channel with systematic encoding and MAP decoding,
in the sense that the error region of the source-channel MAP decoder of the former
is congruent (via an isometric transformation) to the error region of the MAP decoder
of the latter, for any source vector $\uv$ and transmitted information vector $\vv$.
The isometric transformation of the two error regions depends, in general, on $\uv$ and $\vv$.
\end{theorem}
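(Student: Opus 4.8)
\emph{Proof strategy.} The plan is to write down both maximum a posteriori (MAP) decision rules explicitly and then exhibit an isometry of $\CC^n$ that carries the error region of one decoder exactly onto that of the other. First I fix a source realization $\uv$ in the source-channel problem and the information vector $\vv$ in the composite channel, and I \emph{couple} the two experiments by taking the additive noise on the first block of the composite channel to be precisely this $\uv$, so that the first-block output is $\sv = \vv - \uv$ and the second-block output is $\rv = \mu(\vv\Hm) + \zv$. In the source-channel problem the received vector is $\yv = \mu(\uv\Hm) + \zv$, and the MAP rule is $\widehat{\uv} = \arg\max_{\uv'} P(\uv')\, p(\yv \mid \mu(\uv'\Hm))$, where $P$ is the source law and the AWGN likelihood is $p(\yv \mid \mu(\cv)) \propto \exp(-\|\yv - \mu(\cv)\|^2/N_0)$.

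Second, I would expand the composite-channel MAP rule. Because the first-block additive noise and the second-block AWGN are independent and the message $\vv$ is uniform, the posterior factors and the rule becomes $\widehat{\vv} = \arg\max_{\vv'} P(\vv' - \sv)\, p(\rv \mid \mu(\vv'\Hm))$, the first factor being the probability, under $P$, that the first-block noise equals $\vv' - \sv$. I then apply the change of variable $\uv'' = \vv' - \sv$, a bijection of $\GF(q)^K$ under which $\vv' = \vv$ corresponds to $\uv'' = \uv$. Since $\Hm$ is linear over $\GF(q)$, $\vv'\Hm = \uv''\Hm + \sv\Hm$, and the metric turns into $P(\uv'')\, p(\rv \mid \mu(\uv''\Hm + \sv\Hm))$.

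Third, I would invoke the matching condition. By geometric uniformity and the isometric labeling, $\mu(\av + \bv) = w_{\av}(\mu(\bv))$ applied coordinatewise, so $\mu(\uv''\Hm + \sv\Hm) = w_{\sv\Hm}(\mu(\uv''\Hm))$, where $w_{\sv\Hm}$ is the sequence of generating-group isometries indexed by $\sv\Hm$. As each such isometry preserves Euclidean distance, $\|\rv - w_{\sv\Hm}(\mu(\uv''\Hm))\| = \|w_{\sv\Hm}^{-1}(\rv) - \mu(\uv''\Hm)\|$, whence $p(\rv \mid \mu(\uv''\Hm + \sv\Hm)) = p(\yv' \mid \mu(\uv''\Hm))$ with $\yv' \eqdef w_{\sv\Hm}^{-1}(\rv)$. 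Using $\rv = \mu(\vv\Hm) + \zv = w_{\sv\Hm}(\mu(\uv\Hm)) + \zv$ and the fact that the relevant isometries act as orthogonal maps of $\CC \cong \RR^2$, I obtain $\yv' = \mu(\uv\Hm) + \zv'$ with $\zv' \eqdef w_{\sv\Hm}^{-1}(\zv)$ distributed exactly as $\zv$ by circular symmetry. The composite-channel metric is then $P(\uv'')\, p(\yv' \mid \mu(\uv''\Hm))$, \emph{identical} to the source-channel metric evaluated at $\yv'$, and $\widehat{\vv} = \vv$ holds iff the corresponding $\uv''$ equals $\uv$. Hence the two error regions are related by the isometry $\zv \mapsto \zv' = w_{\sv\Hm}^{-1}(\zv)$, which depends on $\uv$ and $\vv$ only through $\sv\Hm = (\vv - \uv)\Hm$, exactly the congruence claimed.

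The main obstacle I expect is not conceptual but lies in one delicate step: applying $\mu(\av + \bv) = w_{\av}(\mu(\bv))$ coordinatewise to length-$n$ sequences while ensuring that the generating-group elements act as origin-fixing orthogonal maps, so that their linear part both commutes with the sum $\mu(\uv\Hm) + \zv$ and leaves the law of the circularly symmetric Gaussian $\zv$ invariant. Verifying that the induced transformation on the noise is measure-preserving, and not merely distance-preserving on the signal points, is where the geometric-uniformity hypothesis is genuinely used, and it is precisely what upgrades the equality of decision metrics to an actual congruence of error regions.
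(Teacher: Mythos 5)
Your proposal is correct and follows essentially the same route as the paper's proof: write both MAP metrics, perform the change of variables $\uv''=\vv'-\sv$, and use the isometric labeling identity $\mu(\av+\bv)=w_{\av}(\mu(\bv))$ to push the group action onto the noise, yielding congruence of the error regions under $w_{(\vv-\uv)\Hm}$ (the paper reduces to $\vv=\zerov$ first by symmetry, in which case your isometry specializes to its $w_{-\xv}$, $\xv=\uv\Hm$). Your closing remark about needing the generating-group isometries to be origin-fixing orthogonal maps, so that they commute with the additive noise and preserve its law, is exactly the point the paper uses implicitly.
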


\begin{proof}
Since the two-block composite channel is symmetric by construction, and the systematic code $[\vv, \cv = \vv\Hm]$ is linear, it is immediate
to show that the MAP decoding error regions for different codewords are mutually congruent. Hence without loss of generality,
it is sufficient to consider $\vv = \zerov$, yielding the all-zero codeword. The MAP decoder for the source-channel coding scheme is
given by
\begin{equation} \label{map-1}
\widehat{\uv} = {\rm arg} \; \max_{\uv': \xv' = \uv' \Hm} \;\; \exp\left ( - \frac{1}{N_0} \left \| \yv - \mu(\xv') \right \|^2\right )  P_U(\uv').
\end{equation}
The MAP decoder for the two-block composite channel coding scheme is given by
\begin{equation} \label{map-2}
\widehat{\vv} = {\rm arg} \; \max_{\vv' : \cv' = \vv' \Hm} \;\;  \exp\left ( - \frac{1}{N_0} \left \| \rv - \mu(\cv') \right \|^2\right )  P_U(\vv' - \sv).
\end{equation}
Using the properties of the geometrically uniform constellation $\mathfrak{X}$, the generic term in the maximization of (\ref{map-1}) can be written as
\begin{eqnarray} \label{zio2}
\exp\left (  \frac{\left \| \yv - \mu(\xv') \right \|^2}{-N_0} \right )  P_U(\uv')
& = &   \exp\left (  \frac{\left \| \mu(\xv) + \zv  - \mu(\xv') \right \|^2}{-N_0} \right )  P_U(\uv') \nonumber \\
& = &   \exp\left (  \frac{\left \| w_{\xv} (\mu(\zerov)) + \zv  - \mu(\xv') \right \|^2}{-N_0} \right )  P_U(\uv') \nonumber \\
& = &   \exp\left (  \frac {\left \| \mu(\zerov) + w_{-\xv}  (\zv)  - w_{-\xv} (\mu(\xv')) \right \|^2}{-N_0}\right )  P_U(\uv') \nonumber \\
& = &   \exp\left (  \frac{\left \| \mu(\zerov) + w_{-\xv}  (\zv)  - \mu(\xv' - \xv) \right \|^2}{-N_0} \right )  P_U(\uv').
\end{eqnarray}
When $\vv = \zerov$ is transmitted, the generic term in the maximization of (\ref{map-2}) becomes
\begin{eqnarray} \label{zio1}
\exp\left (  \frac{\left \| \mu(\zerov) + \zv - \mu(\cv') \right \|^2}{-N_0} \right )  P_U(\vv' + \uv)
& = & \exp\left (  \frac{\left \| \mu(\zerov) + \zv - \mu(\vv'' - \uv)\Hm) \right \|^2}{-N_0} \right )  P_U(\vv'') \nonumber \\
& = & \exp\left (  \frac{\left \| \mu(\zerov) + \zv - \mu(\cv'' - \xv) \right \|^2}{-N_0} \right )  P_U(\vv''),\,
\end{eqnarray}
where we used the change of variable $\vv' + \uv = \vv''$ and we defined $\cv'' = \vv'' \Hm$.

The error region of (\ref{map-1}) is given by:
\begin{equation}
\Ec(\uv) = \left \{ \left . \zv \in \CC^n : \widehat{\uv} \neq \uv \right | \uv \;\; \mbox{is generated by the source} \right \}
\end{equation}
For the same realization of $\uv$, the error region of (\ref{map-2}) when the all-zero codeword is transmitted is given by:
\begin{equation}
\Ec_0(\uv) = \left \{ \left . \zv \in \CC^n : \widehat{\vv} \neq \zerov \right |  \uv \;\; \mbox{is the discrete channel noise} \right \}.
\end{equation}
By comparing (\ref{zio2}) and (\ref{zio1}) and noticing that the sets of vectors
$\left \{ (\xv' - \xv, \uv') : \uv' \in \GF(q)^K \right \} $ and  $\left \{ (\cv'' - \xv, \vv'') : \vv'' \in \GF(q)^K \right \} $ are identical,  we have that
if $\zv \in \Ec_0(\uv)$ then $w_{-\xv}(\zv) \in \Ec(\uv)$ and, vice versa, if $\zv \in \Ec(\uv)$ then $w_{\xv}(\zv) \in \Ec_0(\uv)$.
Since $w_{\xv}$ is an isometry of $\CC^n$, the congruence of the error regions
$\Ec(\uv)$ and $\Ec_0(\uv)$ is established.
\end{proof}

By noticing that the Gaussian distribution is invariant with respect to isometries, the conditional probability of error for the joint source-channel coding
scheme $\PP(\zv \in \Ec(\uv) | \uv)$ and for the associated channel coding scheme $\PP(\zv \in \Ec_0(\uv) | \uv)$ are identical,
for all realizations of the source vector $\uv$.
Furthermore, we also show in Appendix \ref{sec:app-iso} that a similar equivalence holds for the suboptimal Belief Propagation (BP) decoder \cite{Urb_Book},
in the sense that at every iteration of the decoder, the set of messages generated by the message-passing BP decoder
for the source-channel coding scheme can be mapped into the corresponding set of messages generated by the message-passing BP decoder
for the associated channel coding scheme by a probability-preserving mapping \cite{OZGUN-THESIS}.
It follows that good systematic codes for the two-block composite channel (either under MAP decoding or under BP decoding)
yield immediately good codes (with identical performance) for the source-channel coding problem.
Notice that, with no restriction on decoding complexity and block length,  successful decoding can be achieved with high probability
if $n > K H/C_{\mathfrak{X}}(E_s/N_0)$, which is also the Shannon limit for the two-block composite channel.

Focusing on practical coding design with affordable complexity, the proposed coding optimization strategy consists of choosing a family of
good systematic codes under BP decoding for the two-block composite channel.
Since the source entropy varies from image to image, across the source
components $i$ (DWT subbands) and symbol planes $p$,
it is necessary to choose families of codes spanning a very wide range of coding rates.
Systematic Raptor codes are ideal candidates for this application since they can produce parity
symbols ``on demand'', and cover a continuum of coding rates. In addition, they have excellent performance under BP decoding.
Non-universality of Raptor codes for general noisy channels is well-known (see \cite{EtSho06}), and it is established by the fact that  the {\em stability condition} on the fraction of degree-$2$ output nodes depends on the channel parameter. Following the approach of \cite{EtSho06}, in Appendix \ref{sec:app-stability}, we extended the stability condition to the case of the
two-block composite channel and $q$-ary Raptor codes. It turns out that in this case the stability condition is a function of both the symbol plane entropy $H = H_p^{(i)}$ and the channel capacity
$C = C_{\mathfrak{X}}(E_s/N_0)$. Hence, the Raptor degree distribution must be optimized for each pair of $(H,C)$ values.
We perform this optimization using ``EXIT charts'' and linear programming, extending
\cite{EtSho06} and \cite{Bennatan-nonbinary-journal} to handle the two-block composite channel.

Before entering the details of the EXIT chart analysis and Raptor code optimization,
a final remark on the signal constellation is in order. Since the dead-zone quantizer symbols
are ternary, $q$ must be at least 3. We considered both $3$-PSK and QPSK (with Gray Mapping) constellations.
Although $3$-PSK is more naturally matched to the ternary source alphabet, the QPSK modulation has higher capacity.
Hence, it is not a priori obvious which of the two constellation performs better in our context.
In our experiments we observed that QLIC with $q = 3$, using the $3$-PSK constellation,
did not provide any improvement over the case $q = 4$ with the QPSK constellation. Since QPSK with Gray mapping is standardized in
deep-space communications, and the BP decoder is simplified for powers of 2 field size (see \cite{Bennatan-nonbinary-journal}),
$q = 4$ represents a better and more natural choice. Thus, in the following we only focus on QPSK and codes over $\GF(4)$.
 \subsection{EXIT Chart Analysis for the Two-Block Composite Channel} \label{sec:EXIT}
\begin{figure}[htpb]
\centerline{\includegraphics[width=10cm]{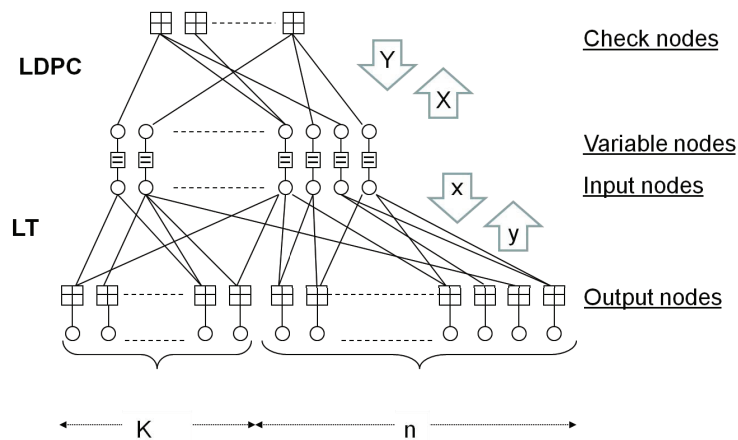}}
\caption{The Tanner Graph of a Raptor Code with LDPC code.} \label{fig:DecRap2}
\end{figure}

We assume that the reader is familiar with Raptor codes, their systematic encoding and
BP iterative decoding, and with the Gaussian approximation EXIT chart analysis technique of BP decoding for
standard binary codes over memoryless binary-input output-symmetric channels (see \cite{EtSho06}).
Here, we focus on the aspects specific to our problem.

A Raptor code is formed by the concatenation of a pre-code, here implemented by a high rate regular LDPC code,
and an ``LT'' code, which is a low-density generator matrix code with a special generator matrix degree distribution
\cite{EtSho06}.  For the Tanner graph of the LT code, we define the input nodes and the output nodes.
For the Tanner graph of the LDPC code, we define the variable nodes and the check nodes
(see Fig. \ref{fig:DecRap2}).
We consider  Raptor codes over $\GF(4)$ with systematic encoding. The first $K$ output symbols of the Tanner graph of Fig.~\ref{fig:DecRap2}
are the systematic symbols, corresponding to the source block $\uv$. The remaining $n$ output nodes are the non-systematic (parity) symbols,
corresponding to the codeword $\xv$. Thanks to the equivalence of Theorem \ref{map-isomorphism} and to the analogous equivalence for BP decoding
\cite{Ozgun-Maria-JSCC-08}, \cite{OZGUN-THESIS}, we consider the transmission of the Raptor codeword over the two-block composite channel
where the first block of $K$ symbols go through the additive noise over $\GF(4)$ with noise identically distributed as the source vector $\uv$, and the
second block of $n$ symbols is mapped onto QPSK by Gray mapping and is sent through the AWGN channel (\ref{awgn}).
Hence, the rest of this section is dedicated to the Raptor code ensemble optimization (namely, the optimization of its degree distribution) for
the associated two-block composite channel.

For codes over $\GF(4)$ we use the Gaussian approximation approach proposed in \cite{Bennatan-nonbinary-journal}.
In particular, the conditional distribution of each message $\Lm$ in Log-Likelihood Ratio (LLR) domain~\footnote{The BP messages for $q$-ary codes can be either represented as probability vectors  (of length $q$) or
as LLR  vectors of length $q-1$. If $\mv$ is a message in the probability domain,
the corresponding message in the LLR domain, denoted by $\Lm$, has elements  ${L}_i = \log(m_0/m_i)$ for $i=0,\ldots, q-1$.} is assumed to be Gaussian $\Lm \sim \Nc(\upsilon \onev,\Sigma_\upsilon)$,
where $\left[\Sigma_{\upsilon}\right]_{i,j} = 2\upsilon$ for $i=j$ and $\left[\Sigma_{\upsilon}\right]_{i,j} = \upsilon$ for $i\neq j$.
It can be noticed that the conditional distribution depends only on a single parameter $\upsilon$ thanks to symmetry and permutation invariance assumption of the messages as defined in \cite{Bennatan-nonbinary-journal}.
Letting $V$ the code variable corresponding to the edge message $\Lm$, we define the mutual information function
$  J(\upsilon)\eqdef   I(V;\mathbf{L}) = 1 - \EE \left [ \log_4\left (1 + \sum_{i = 1}^3e^{-L_i} \right ) \right ] . $
We use base-4 logarithm for mutual information calculations, hence in these sections $H$ and $C$ are in units of two bits
per source symbol or per channel symbol, respectively.

The EXIT chart is the mapping function of a multidimensional dynamic system that describes the evolution
of the mutual information between the Tanner graph variables and the corresponding messages passed along the
Tanner graph edges by the BP decoder. The stationary points of such dynamic system are given as the solutions of a set of
EXIT chart fixed-point equations, given in terms of the following {\em state variables}:\\
- $\x$ denotes the average mutual information between a randomly chosen input node symbol and
the corresponding message sent downward to an adjacent edge (from input to output nodes). See Fig. \ref{fig:DecRap2}.\\
- $\y$ denotes the average mutual information between a randomly chosen input node symbol and
the corresponding message received upward from an adjacent edge (from output to input nodes).\\
- $\X$ denotes the average mutual information between a randomly chosen
variable node symbol and the corresponding message sent upward to an adjacent edge (from variable to check nodes).\\
- $\Y$ denotes the average mutual information between a randomly chosen
variable node symbol and the corresponding message received downward from an adjacent edge (from check to variable nodes).\\
The degree distributions for the Tanner graph in Fig.~\ref{fig:DecRap2} are defined as follows:\\
- For the LDPC code, we let $\lambda(x) = \sum_i \lambda_i x^{i-1}$ and $\rho(x) = \sum_j \rho_j x^{j-1}$
denote the generating functions of the edge-centric left and right degree distributions, and we let
\[ \Lambda(x) = \sum_i \Lambda_i x^i = \left(\int_0^x \lambda(u) du\right)/\left(\int_0^1\lambda(u) du\right), \]
denote the node-centric left degree distribution.\\
- For the LT code, we let $\iota(x) = \sum_i \iota_i x^{i-1}$ denote the edge-centric degree distribution
of the input nodes, and we let $\omega(x) = \sum_j \omega_j x^{j-1}$ denote the edge-centric degree distribution of the ``output nodes''.
The node-centric degree distribution of the output nodes is given by
\[ \Omega(x) = \sum_i \Omega_j x^j = \left(\int_0^x \omega(u) du\right)/\left(\int_0^1 \omega(u) du\right). \]
- For the concatenation of the LT code with the LDPC code we also have the node-centric degree distribution of the LT input nodes.
This is given by
\[ \gimel(x) = \sum_i \gimel_i x^i = \left(\int_0^x \iota(u) du\right)/\left(\int_0^1 \iota(u) du\right). \]
Note that for large number of nodes we have the following approximation for $\gimel(x)\sim e^{\alpha(x-1)}=\sum_n\frac{\alpha^n e^{-\alpha}}{n!}x^n$ where $\alpha = \sum_i \gimel_i i$ is the average node degree for the input nodes~\cite{EtSho06}.
Hence $\iota(x)$ is approximated by the following coefficients
\begin{equation}\label{eq:iota_i}
\iota_i = \alpha^{i-1}e^{-\alpha}/(i-1)!.
\end{equation}
The capacities of the first and second components of the two-block composite channel are $1 - H$ and $C$, respectively.
A random edge $(\sf o, \sf v)$ is connected with probability $\gamma = K/(K+n)$ to the first block and
with probability $1 - \gamma$ to the second block.  As a consequence, it is just a matter of a simple exercise to obtain
the following EXIT equations for the LT code component (Detailed derivations for binary Raptor codes can be found in our previous work \cite{Ozgun-Maria-JSCC-08}):
\begin{eqnarray} \label{ldgm-down}
\x &  = &  \sum_k \sum_i \Lambda_k  \iota_i J( (i-1) J^{-1}(\y) + k J^{-1}(\Y)),\\
 \label{ldgm-up}
\y & = & 1 - \sum_j \omega_j {\Big\{}\gamma J( (j-1) J^{-1}(1 - \x) + J^{-1}(H))+(1 - \gamma) J( (j-1) J^{-1}(1 - \x) + J^{-1}(1 - C))\Big{\}}. \nonumber \\
& &
\end{eqnarray}
Also, notice that $\gamma = r_{\rm lt} r_{\rm ldpc}$,  where $r_{\rm lt} = \frac{1}{\alpha\sum_j\omega_j/j}$ and
$r_{\rm ldpc} = 1 - \frac{\sum_i i\lambda_i}{\sum_j j \rho_j}$ are the coding rates of the LT code and of the LDPC code,
respectively.

The EXIT equations for the LDPC component are well-known and are given by:
\begin{eqnarray} \label{ldpc-up}
\X & =  & \sum_k \sum_i \lambda_k  \gimel_i J( (k-1) J^{-1}(\Y) + i J^{-1}(\y)),\\
\label{ldpc-down}
\Y &=& 1 - \sum_\ell \rho_\ell J( (\ell - 1) J^{-1}(1 - \X)).
\end{eqnarray}
Eventually, (\ref{ldgm-down}), (\ref{ldgm-up}), (\ref{ldpc-up}), and (\ref{ldpc-down}) form the system of fixed-point
equations describing the stationary points the EXIT chart of the concatenated LT -- LDPC graph,
with parameters $H, C$ and $\gamma$, and the degree sequences
$\omega, \iota, \rho$ and $\lambda$.

The error probability of the output nodes corresponding to the first block of $K$ output nodes,
sent through the discrete additive noise component of the two-block channel,
is identical to  the error probability of the source symbols in the source-channel equivalent problem.
Therefore, the key quantity of interest for the performance of the JSCC scheme is the error probability of such output nodes. This is
can be obtained, within the assumptions of EXIT chart approximation, as follows.
The mean of the LLR of such an output node of degree $j$ is given by
$\upsilon_j = J^{-1} \left (1 - J\left ( j J^{-1}(1 - \x) \right ) \right )  + J^{-1}(1 - H)$.
By the channel symmetry and the code linearity, the EXIT chart is derived under all-zero codeword assumption.
Hence, decoding is successful if the LLR vector has positive components.
For an output node of degree $j$, this results in the symbol error rate (SER) $1-\PP(\Lm\geq \zerov)$, with $\Lm \sim \Nc(\upsilon_j \onev,\Sigma_{\upsilon_j})$.
By averaging over the degree distribution, the desired average SER is given by
\begin{equation} \label{BER}
P_e = \sum_j \Omega_j \left[1-Q^3\left( -\frac{\sqrt{\upsilon_j}}{2} \right)\right].
\end{equation}

\subsection{LT Degree Distribution Optimization}\label{sec:degree_opt}

For simplicity, we fix the LDPC code to be a regular  $(2,100)$ code  ($r_{\rm ldpc} = 0.98$).
For this LDPC code, we find the mutual information threshold $\Yc_0(\alpha)$ (using (\ref{ldpc-up}) and (\ref{ldpc-down}))
such that for $\y \geq \Yc_0(\alpha)$ the LDPC EXIT converges to  $\Y = 1$, with stand-alone iterations.
The value of $\Yc_0(\alpha)$ depends on the LT input degree distribution $\iota(x)$, which in turns depends on
$\alpha$ via (\ref{eq:iota_i}). The function $\Yc_0(\alpha)$ is monotonically decreasing. Therefore, higher values of $\alpha$ yield
less restrictive requirements for the mutual information that the LT code must attain in order to allow the LDPC code to converge to
$\Y = 1$ (vanishing error probability). On the other hand, larger values of $\alpha$ yield smaller
LT coding rate $r_{\rm lt}$, and therefore are more conservative with respect to the system bandwidth efficiency.


Next, we use (\ref{ldgm-down}) and (\ref{ldgm-up}) to eliminate $\x$ and write $\y$ recursively.
The fixed-point equation for $\y$ depends on the input $\Y$ coming from the LDPC graph. In order to obtain a tractable problem,
we decouple the system of equations (\ref{ldpc-up}-\ref{ldpc-down}) and (\ref{ldgm-down}-\ref{ldgm-up})
the target mutual information $\Yc_0(\alpha)$ and disregarding the feedback from LDPC to LT in the BP decoder (i.e., letting $\Y = 0$ in (\ref{ldgm-down}).
This is equivalent to running BP with the following schedule: first iterate the LT code till convergence, and then iterate the LDPC code till convergence.
The resulting recursion mapping function $f_j^{H,C,\gamma,\alpha}(\y)$  for a degree-$j$ output node is given by
\begin{eqnarray}\label{eq:recursion-y}
f_j^{H,C,\gamma,\alpha}(\y)&\eqdef&{\Big\{}\gamma J\left( (j-1) J^{-1}(1 -  \sum_i  \iota_i J( (i-1) J^{-1}(\y))) + J^{-1}(H)\right) \nonumber\\
&&+(1 - \gamma) J\left( (j-1) J^{-1}(1 - \sum_i  \iota_i J( (i-1) J^{-1}(\y) )) + J^{-1}(1 - C)\right)\Big{\}}.
\end{eqnarray}
We conclude that the LT EXIT recursion converges to the target $\Yc_0(\alpha)$ if
\begin{equation}\label{eq:LT-conv-rec}
\y < 1 - \sum_j \omega_j f_j^{H,C,\gamma,\alpha}(\y),\;\;\forall \y\in\left[0,\Yc_0(\alpha)\right].
\end{equation}
In order to ensure this condition,  we sample the interval $[0,\Yc_0(\alpha)]$ on a sufficiently fine grid of
points $\{\y_i\}$, and obtain a set of linear constraints for the variables $\{\omega_j\}$.
The code ensemble optimization consists of maximizing $r_{\rm lt}$ for given $H,C$ pair, subject to the condition that the BP decoder converges to vanishing
error probability. The optimization variables are $\{\omega_j\}$ and $\alpha$. Since the LDPC code is fixed, $\gamma$ is a function of
$\alpha, \{\omega_j\}$. In order to linearize the constraints in $\{\omega_j\}$ we replace
$\gamma$ in (\ref{eq:LT-conv-rec}) with its ideal value $C/(C+H)$, arguing that good codes must have $\gamma \approx C/(C+H)$.
This yields the optimization problem:
\begin{eqnarray} \label{eq:linprog}
\mbox{min}_\alpha &\mbox{min}_{\{\omega_j\}} &  \alpha\sum_j\frac{\omega_j}{j}  \nonumber\\
&\mbox{s. t.}  & \sum_j \omega_j = 1,\;\omega_j\geq 0,\nonumber \\
&& \y_i<1-\sum_j \omega_j f_j^{H,C,\small{\frac{C}{C+H}},\alpha}(\y_i), \;\;\; \forall \; \y_i\in[0,\Yc_0(\alpha)].
\end{eqnarray}
For fixed $\alpha$,  (\ref{eq:linprog}) is a linear program with respect to $\{\omega_j\}$. Hence, we can run an educated line search with respect to
the scalar variable $\alpha$ and, for each tentative $\alpha$, easily optimize over  $\{\omega_j\}$.

Let $\Psi(\upsilon)\eqdef \mathbb{E}\left[\frac{1-e^{-L_1}+e^{-L_2}-e^{-L_3}}{1+e^{-L_1}+e^{-L_2}+e^{-L_3}}\right]$, then the stability condition obtained in Appendix \ref{sec:app-stability} reads \cite{OZGUN-THESIS}:
\begin{equation}\label{eq:stability}
\Omega_2 \geq \frac{\gamma/r_{\rm ldpc}}{2\left(\gamma \Psi\left(J^{-1}(1-H)\right)+(1-\gamma)\Psi\left(J^{-1}(C)\right)\right)}\;.
\end{equation}
We accept the solution of the optimization
if (\ref{eq:stability}) is satisfied. Otherwise, the optimization is re-run with a more conservative value of $\alpha$.

\section{Results}\label{sec:Results}

We present the performance of the QLIC scheme and compare it with the
baseline (state-of-the art) system used by Jet Propulsion Laboratories (JPL) for the Mars Exploration Rover (MER) Mission.
For the purpose of this comparison, we briefly present the current baseline system.
The scheme is based on a separated source and channel coding approach, concatenating
an image coding scheme called ICER \cite{kiely-icer-report} with standard codes for deep-space communications
\cite{jpl-codes-ieee}, \cite{ORange}.

ICER is a successive refinement, wavelet-based image compressor based on the same principles
of JPEG2000, including image segmentation, DWT, quantization, and entropy coding of the blocks of quantization indices
using interleaved entropy coding and an adaptive probability estimator based on context models \cite{kiely-icer-report}.
These components differ from  their JPEG2000 counterparts in order to handle specific needs of scientific images for
deep-space exploration.  ICER makes use of a reversible integer-valued DWT \cite{pearlman}
so that, if all the subbands data are fully transmitted, lossless reconstruction can be obtained.
Since subband coefficients are integer values, the dead-zone quantizer is also modified to work for integer
values as described in \cite{kiely-icer-report}. The quantization precision for each subband and the selection of which subbands should be transmitted
in order to minimize the total number of bits subject to a given target reconstruction Peak SNR~\footnote{The reconstruction PSNR is defined as
$\PSNR = 10\log_{10}\frac{2^{i}-1}{D}$ where $D$ is the WMSE distortion and where $i=12$, since for MER mission each pixel
is a $12$-bit value in the original image.} (PSNR) are established dynamically, based on the actual image to be encoded,
according to the relative importance of each subband.
The resulting priority-ordered bit planes are encoded one by one, until the target PSNR (or total bit budget) is reached.
ICER and JPEG2000 (using either lossless-5/3 integer DWT or lossy-9/7 DWT) provide
similar pure image compression performances, i.e., when used on noiseless channels \cite{kiely-icer-report}.
In \cite{JPL-report} we found that the pure compression performance of the proposed QLIC scheme is also almost
identical to ICER. This provides a good sanity check for QLIC, which is not inferior to the state-of-the art as far as
pure image compression is concerned.

In order to increase robustness against channel errors, ICER partitions the image into segments.
A segment ``loosely'' corresponds to a rectangular region of the image (although in practice a more sophisticated adaptive segmentation scheme is used).
Each image segment is compressed independently. In this way, the error propagation introduced by possible residual post-decoding channel errors is limited to within a segment.  The encoded bits corresponding to the segments are concatenated and divided into
fixed-length frames, that are separately channel-encoded at channel coding rate $R_c$. This takes on values in a finite set of
possible coding rates supported by the family of deep-space channel coding schemes.
The channel coding rate $R_c$ is chosen according to the channel SNR.
The presence of residual errors is detected with probability close to 1 using standard error detection techniques, and
the frames with residual post-decoding errors are erased. Frame erasure is the main cause of data loss in the baseline
JPL scheme \cite{kiely-icer-report}.

Since ICER is a progressive image compressor, all successfully decoded frames of a segment before the first erased
frame can be used for source reconstruction. The reconstruction quality of a segment depends on the position of the first
frame erasure (of course, the highest quality is obtained if no erasure occurs). No unequal error protection is used for the sequence of
successive frames forming a segment. Therefore, all frames have the same erasure probability \cite{kiely-icer-report}.
As a consequence, segments may be reconstructed at very different quality level, depending on the presence and position
of frame erasures. If a segment achieves too poor reconstruction quality, the retransmission of the whole segment is requested.
In MER, retransmissions are possible with a delay roughly equal to the round trip time between Earth and Mars which is between 7-30 minutes \cite{round-trip}.
In addition, retransmissions  require storing the images on the deep-space probe, for a long time and a feedback channel from Earth to Mars
for retransmission requests.

In our comparison, we consider just the spectral efficiency of the ``active transmission'' phase, i.e., as defined by the parameter $b$.
This is the ratio between channel uses (including retransmissions) and source samples.
The comparisons reported here do not take into account the long idle times and the enormous delay incurred
by retransmissions, because the ``cost'' of these system aspects is difficult to quantify from a communication theoretic viewpoint.
However, we hasten to say that the proposed JSCC scheme {\em does not} require retransmissions unless the channel SNR dramatically changes
with respect to the nominal value assumed at the transmitter. Hence, although the spectral efficiency performance is slightly inferior
to the baseline system, this built-in robustness able to avoid retransmissions is a very attractive feature in terms of delay and system simplification.
\begin{figure}[htpb]
\centerline{\includegraphics[width=8cm,height=6cm]{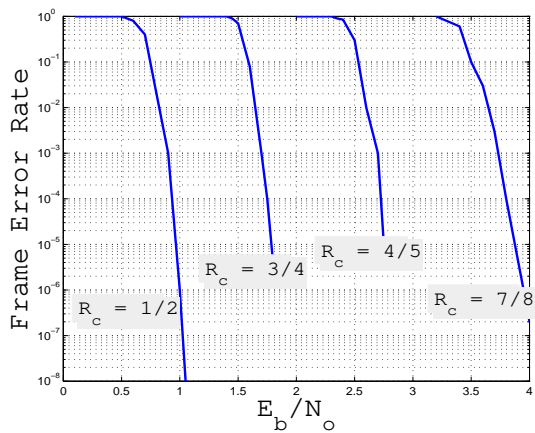}}
\caption{FER vs $E_b/N_o$ curves for block length $16$Kb.}
\label{fig:FER-curves}
\vspace*{-0.3 cm}
\end{figure}
First, we considered a scenario where the target PSNR is fixed.
For a given set of test images~\footnote{Provided by JPL-MER Mission Group.},
we compare the two schemes in terms of $b$ versus $E_s/N_0$, for the same target PSNR.
The Frame Erasure Rate (FER) of the baseline system
is a function of the channel SNR and of the channel coding rate $R_c$ used.
The number of transmissions necessary for the successful reconstruction of a segment
is a geometric random variable with success probability that depends on the FER, the number
of frames $F$ in a segment, and on the target PSNR.
As mentioned before, the reconstruction quality of a segment depends on the position of the
first erased frame in the segment. Upper and lower bounds to the success probability developed in \cite{JPL-report} show that,
for the typically very high target PSNR required by deep-space scientific imaging, the success probability is tightly approximated by
$(1 - {\sf FER})^{F}$
(i.e., a segment is retransmitted whenever a frame is in error, irrespectively of its position).
For a given channel SNR, the baseline scheme chooses the deep-space channel code with maximum rate $R_c$, subject
to the condition that the FER must be smaller than a target threshold (a typical target is $10^{-6}$).
The target FER is fixed in order to achieve a desired, and typically very small, retransmission probability.
For a well matched SNR and rate pair, the FER is effectively very small and the expected number of re-transmission
is insignificant. In this case, $b$ is very close to the ``one-shot'' transmission value,
i.e. $B/(2R_c)$,  where $B$ is the number of ICER-encoded bits per pixel at the given target PSNR
and the factor 2 comes from the fact that QPSK transmits 2 coded bits per channel use.

For each fixed $R_c$, the corresponding $b$ vs. $E_s/N_0$ curve has a very pronounced ``L'' shape,
due to the sharp waterfall of the FER (see Fig. \ref{fig:FER-curves}).
Hence, the SNR axis can be split into intervals, where each interval corresponds to the range of $E_s/N_0$ values
for which a given coding rate ``dominates'', i.e. yields the best efficiency (including retransmissions). If $E_s/N_0$ is known in advance,
and the cost of retransmissions is neglected, for each SNR falling in a given interval, the corresponding coding rate is selected.
For the example considered in this paper, the target PSNR is $49$ dB and the MER1 image ($1024\times 1024$ BW uncoded 12-bit per pixel)
is used. Fig. \ref{fig:Jan21}-A compares the resulting $b$ vs $E_s/N_0$ performance of JSCC and of the baseline scheme.
The following comments are in order:\\
- The ($\ast$)-curve corresponds to considering ideal capacity achieving codes for each symbol plane
in the JSCC scheme. This represents the best possible performance for the DWT and quantization scheme used in the proposed system.
This curve is also a very good approximation of the performance of a separated scheme based on QLIC or ICER for pure compression,
concatenated with an ideal capacity achieving channel code, since the pure source compression performance of ICER and QLIC is essentially indistinguishable
\cite{JPL-report}.\\
- The performance of the actual baseline scheme is shown as a superposition of four L-shaped curves, each of which corresponding to one
of the codes whose FER performance is shown in Fig. \ref{fig:FER-curves}, as explained before.
The steep increase of $b$ for small degradation of $E_s/N_0$ beyond the ``knee'' point of each L-shaped curve indicates that
if the channel quality degrades slightly below the threshold at which each channel code yields small FER, then the number of retransmissions
per segment increases dramatically. If such channel quality degradation occurs (e.g., atmospheric propagation phenomena, rain conditions, antenna alignment fluctuations),
then the conventional system folds back onto a more conservative channel coding rate, and its performance moves on
the L-shaped curve to the left.  The channel coding rate value $R_c$  corresponding to each curve
is also shown in Fig. \ref{fig:Jan21}-A.\\
- Before designing degree distributions specifically for the Raptor codes over $\GF(4)$ as described in Sec.~\ref{sec:EXIT}, we first used the degree distribution
\begin{eqnarray}\label{eq:classic_LT}
\Omega(x) & = & 0.008x + 0.494x^2 + 0.166x^3 + 0073x^4 + 0.083x^5 \nonumber \\
& & + 0.056x^8 + 0.037x^9 + 0.056x^{19} + 0.025x^{65} + 0.003x^{66},
\end{eqnarray}
given in \cite{shokr2006IT} for binary Raptor codes. The EXIT chart infinite-length performance and finite length simulations for this
non-optimized non-binary case are shown in Fig.~\ref{fig:Jan21}-A as the
($-.$)-curve and the ($\square$)-curve, respectively.\\
- In Sec.~\ref{sec:code-design} we noted that the proposed method for the Raptor code degree distribution optimization in (\ref{eq:linprog})
is a linear program when the parameter $\alpha$ and the LDPC code are fixed. As seen in Sec.~\ref{sec:degree_opt},
$r_{\sf lt}$ is a function of $\alpha$ and $\omega(\cdot)$.
Hence, the aim is to maximize $r_{\sf lt}$ by optimizing $\omega(\cdot)$ for fixed $\alpha$, at each value of $E_s/N_0$. The value of
$\alpha$ for given $E_s/N_0$ is obtained using the non-optimized code simulations as follows:
for the non-optimized (($-.$) and ($\square$)) cases the distribution $\omega(\cdot)$ is given by (\ref{eq:classic_LT})
and $r_{\sf lt}$ is provided by the simulation at each $E_s/N_0$ point.  Then, we obtain the corresponding $\alpha$
by using $\omega(\cdot)$ and $r_{\sf lt}$. Finally, for this fixed pair of $\alpha$ and $E_s/N_0$, we use the linear program in order to
optimize the LT degree distribution.
Although an exhaustive search over the feasible range of $\alpha$ may yield further improvements, the above simple
method already provides a noticeable performance enhancement both in terms of the EXIT chart infinite-length performance
($\circ$) and in terms of the finite-length simulation ($\diamond$), with respect to the corresponding
non-optimized curves ($-.$) and ($\square$).\\
Next, we focus on a particular channel SNR value (in particular, we choose $E_s/N_0 = 3$ dB),
and provide a zoomed version of Fig. \ref{fig:Jan21}-A around this value in Fig. \ref{fig:Jan21}-B.
For this SNR, the channel code with rate $R_c = 3/4$ yields the best performance for the conventional system.
Now we consider the case of a mismatch between the actual and the nominal channel quality, i.e., we assume that the transmitter chooses the optimal scheme
(baseline or proposed JSCC) for the nominal $E_s/N_0 = 3$ dB, but the actual value of $E_s/N_0$ is less than 3 dB.
In this case, the efficiency $b$ of the baseline system significantly decreases
due to the retransmissions. At a certain point, as the channel conditions worsens, the baseline system switches to the next lower channel coding rate
$R_c = 1/2$. This happens at $E_s/N_0 \approx 2.8$ dB. The proposed  JSCC scheme has a better built-in robustness to handle
mismatched channel conditions, thanks to the QLIC linear map. We observe that the JSCC scheme optimized for  $E_s/N_0 = 3$ dB and
with no retransmission yields constant $b$ and a slight degradation of the reconstruction PSNR over the range of channel SNR.
Due to mismatched channel conditions, there will be some residual error in the symbol planes.
However, as seen in Fig. \ref{fig:MER1}-b,-c,-d,-e, the perceptive quality of the reconstructed image (and the reconstruction PSNR)
gracefully degrades and the perceived image quality is acceptable over a wide range of channel SNRs, since there are no artificial ``block effects''
due to segment losses, even though the channel SNR is as far as $1$ dB less than  its nominal value of 3 dB.
Similar behaviors have been observed by extensive experimentation, not reported here for the sake of brevity and space constraints.
\begin{figure*}[htpb]
\centerline{(A)\includegraphics[width=7.5cm,height=6cm]{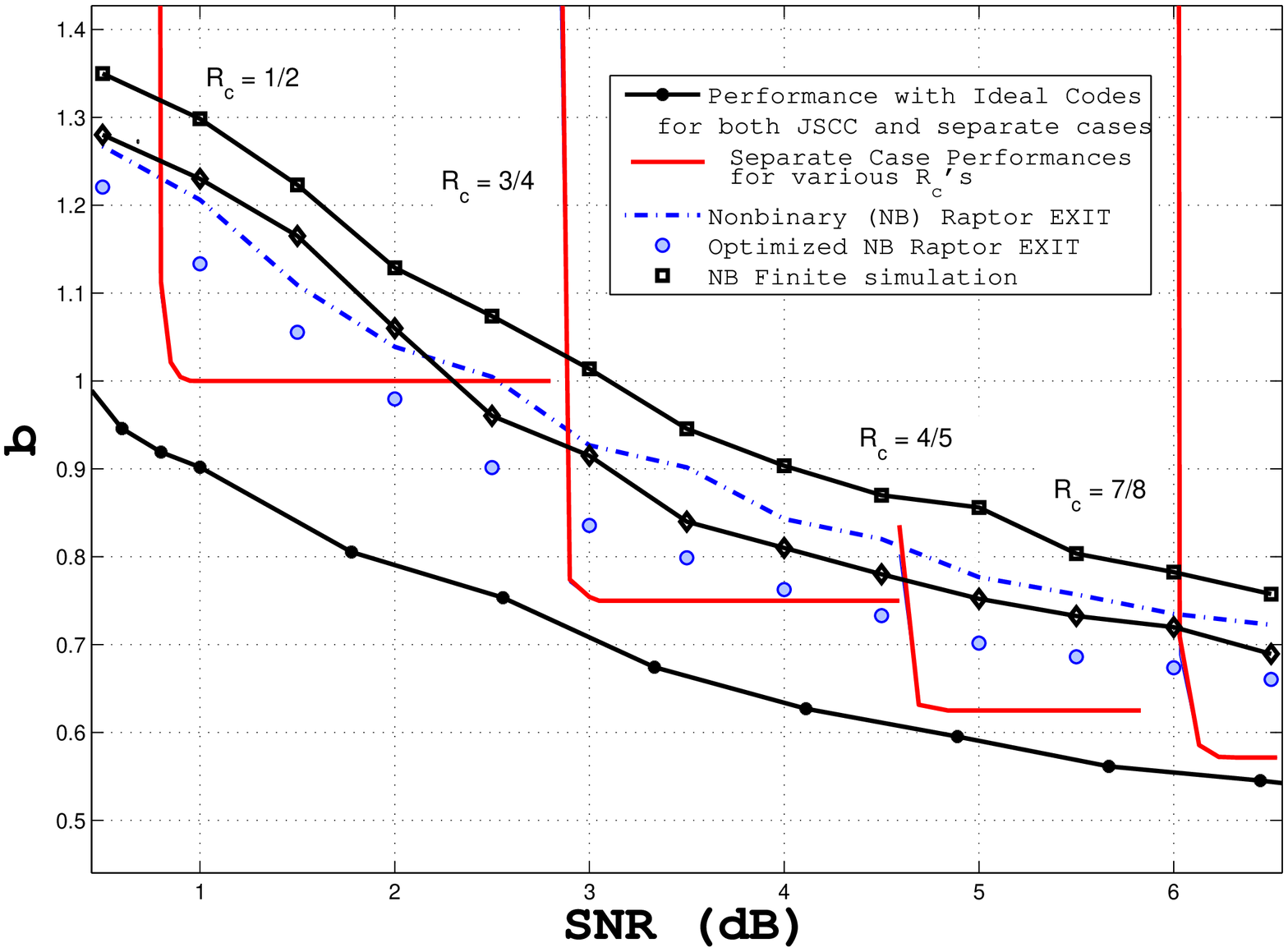}
(B)\includegraphics[width=7.5cm,height=6cm]{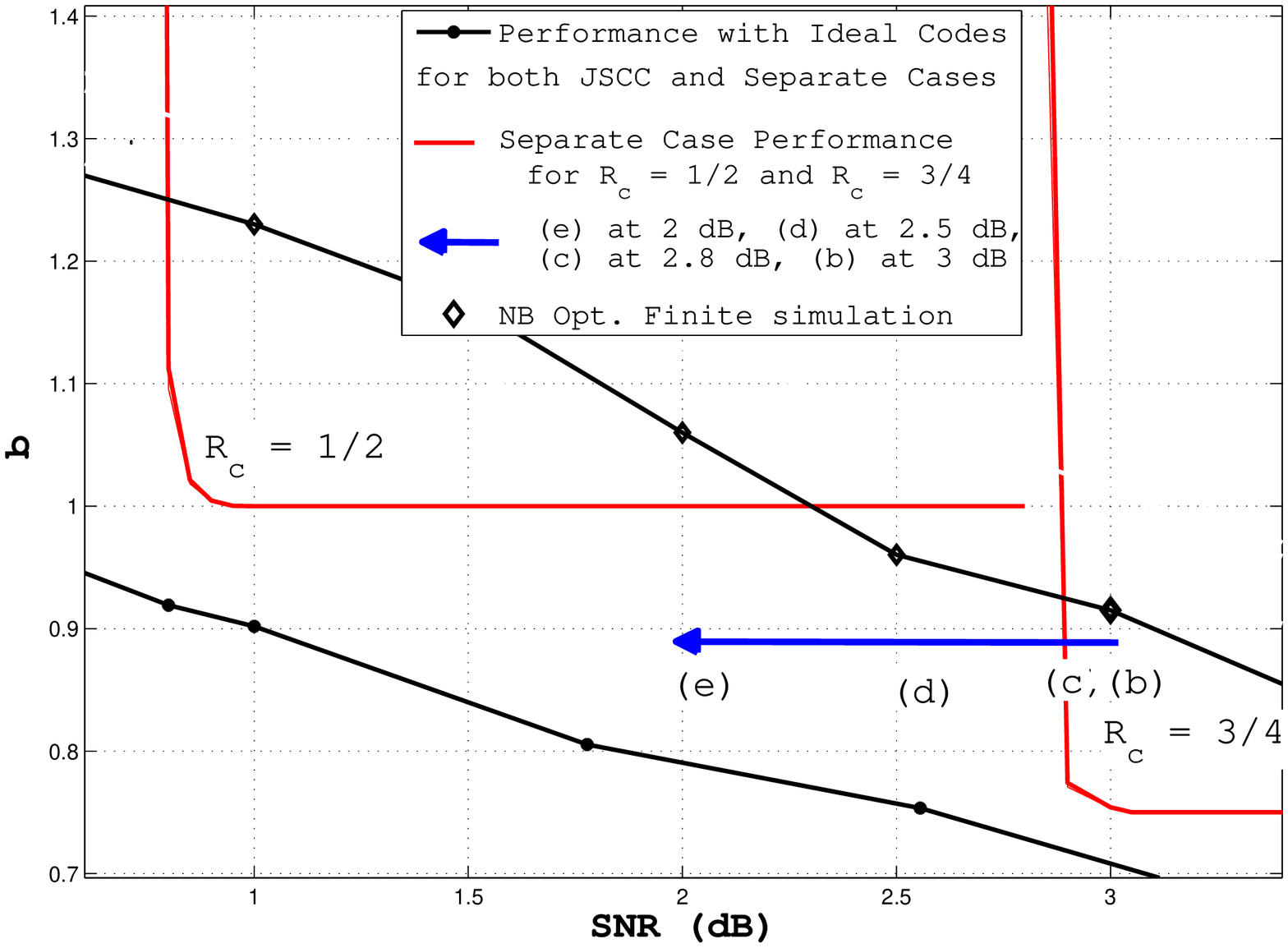}}
\caption{(A): $b$ vs $E_s/N_0$ trade-off curves for various schemes. (B): Focusing on $E_s/N_0 = 3$ dB point. The image reconstructions at various mismatched SNR
values, indicated by (b), (c), (d), (e) in (B), are shown in Fig.~\ref{fig:MER1}.}
\label{fig:Jan21}
\vspace*{-0.3 cm}
\end{figure*}

\begin{figure}[htpb]
\centerline{(a)\includegraphics[width=5.5cm,height=5.5cm]{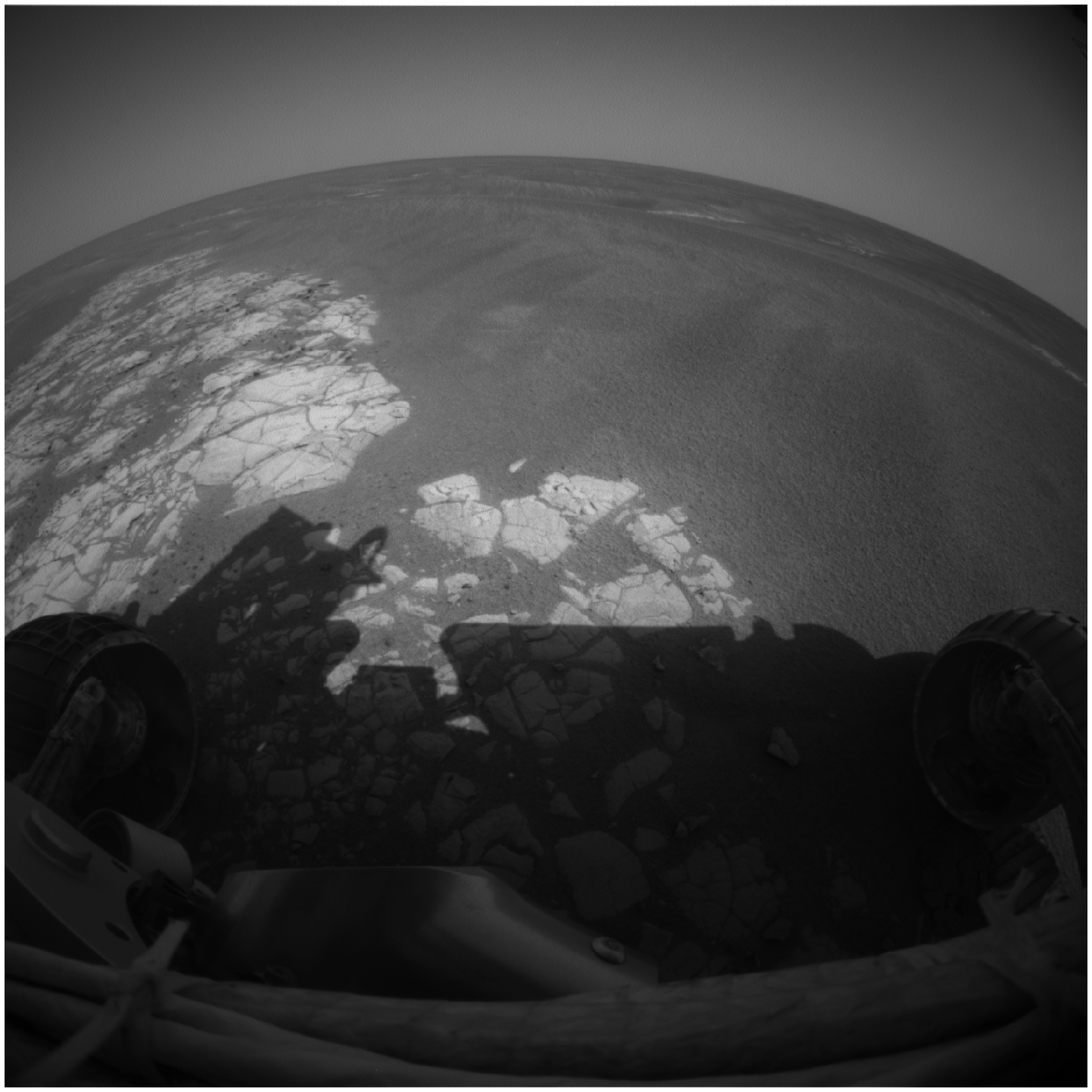}
(b)\includegraphics[width=5.5cm,height=5.5cm]{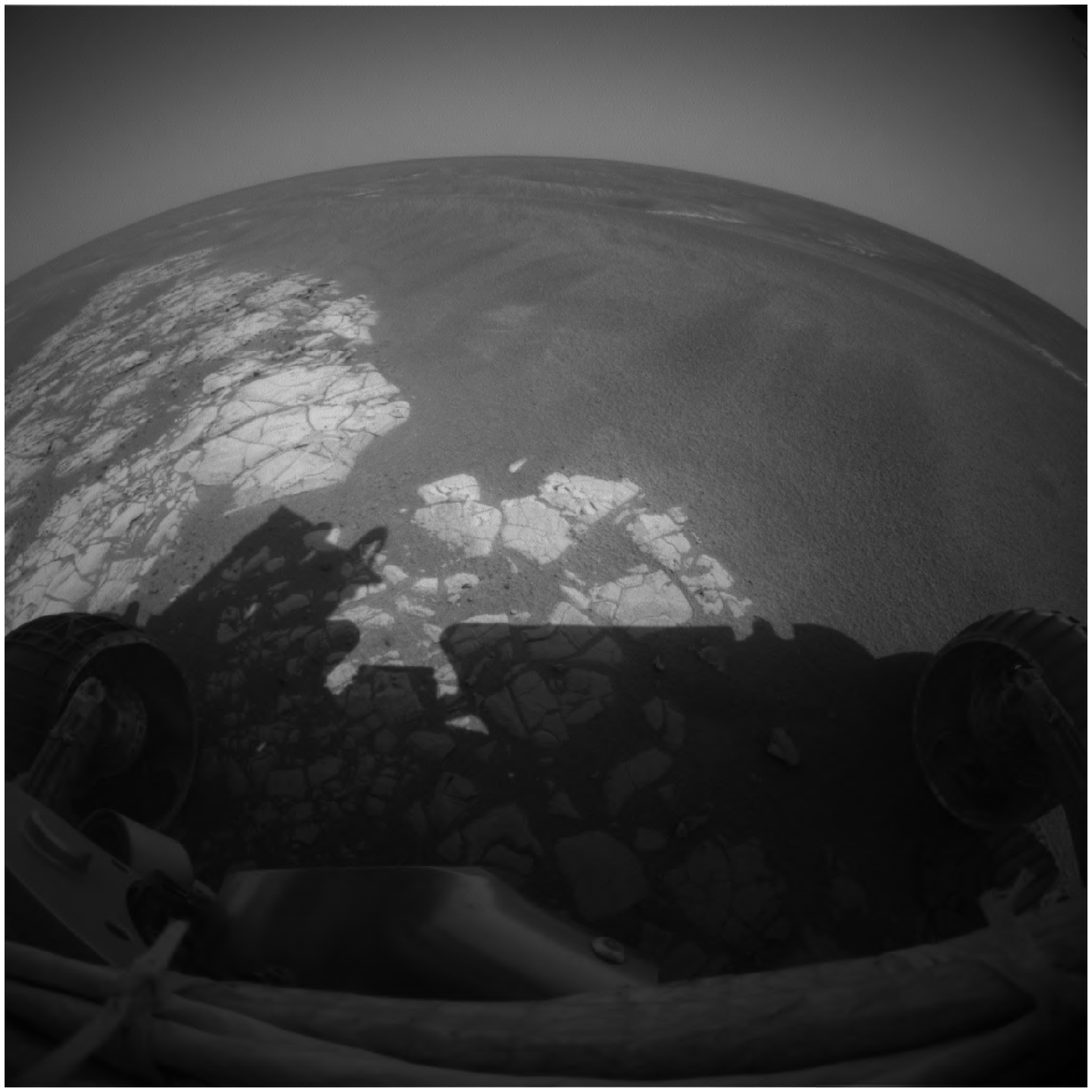}
(c)\includegraphics[width=5.5cm,height=5.5cm]{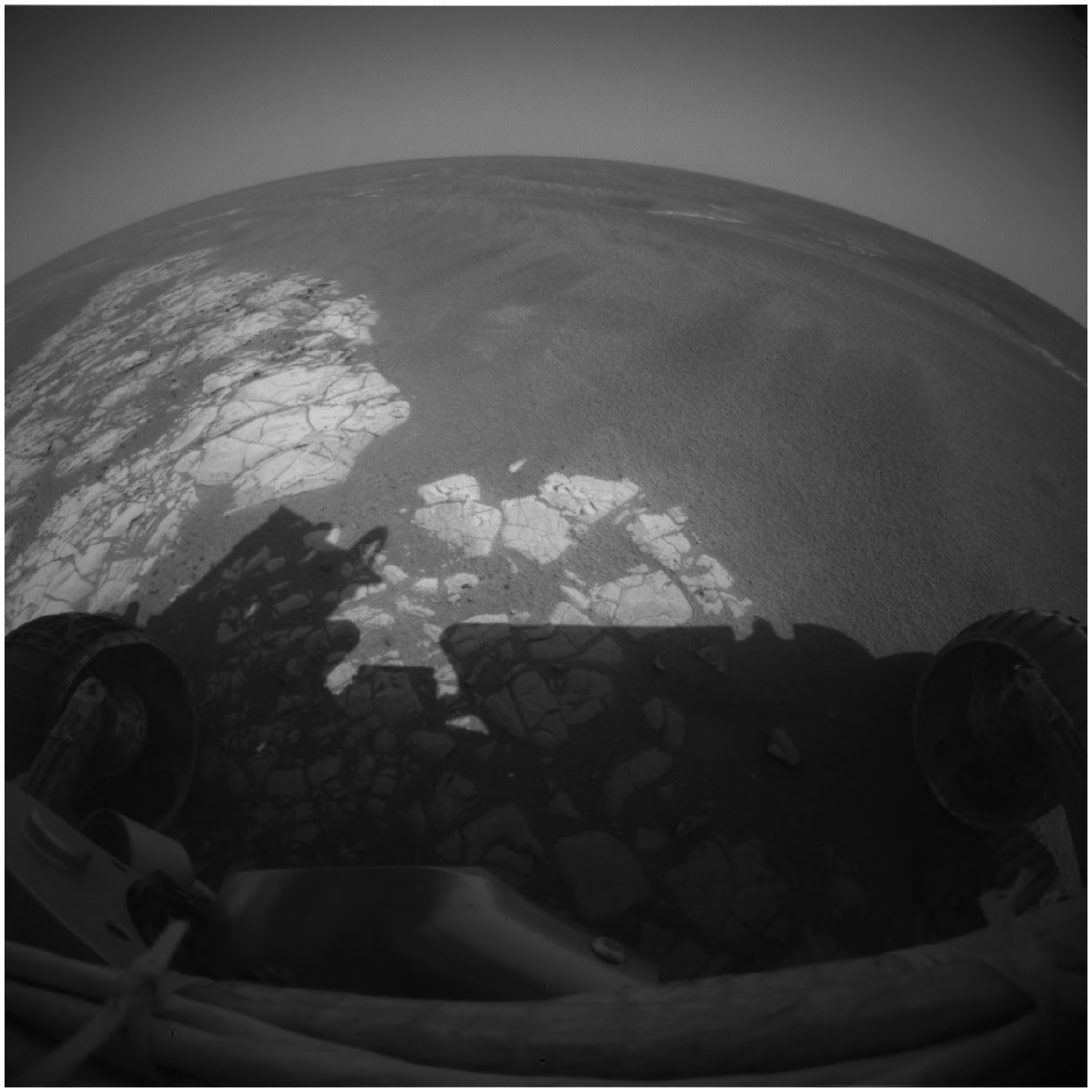}}
\centerline{(d)\includegraphics[width=5.5cm,height=5.5cm]{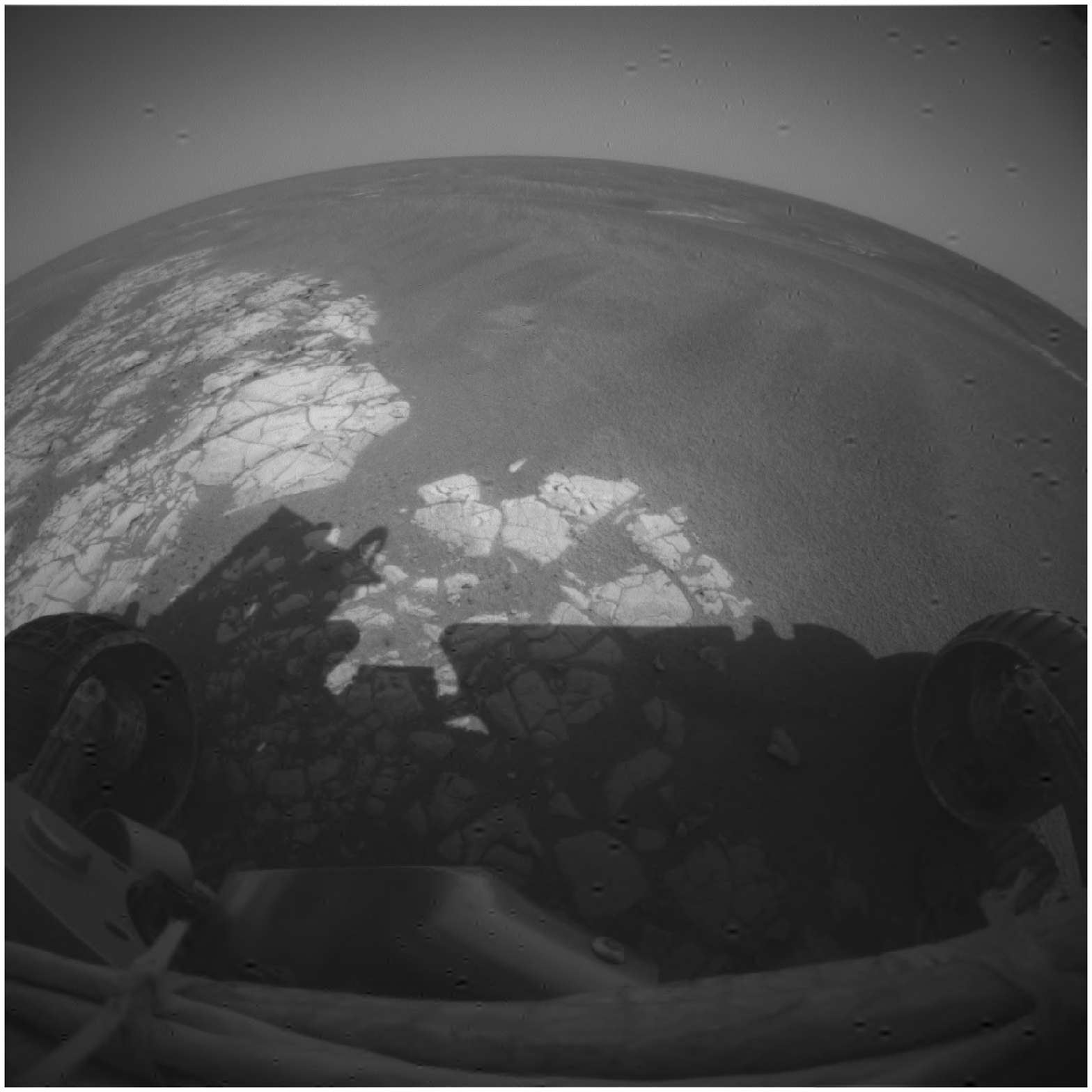}
(e)\includegraphics[width=5.5cm,height=5.5cm]{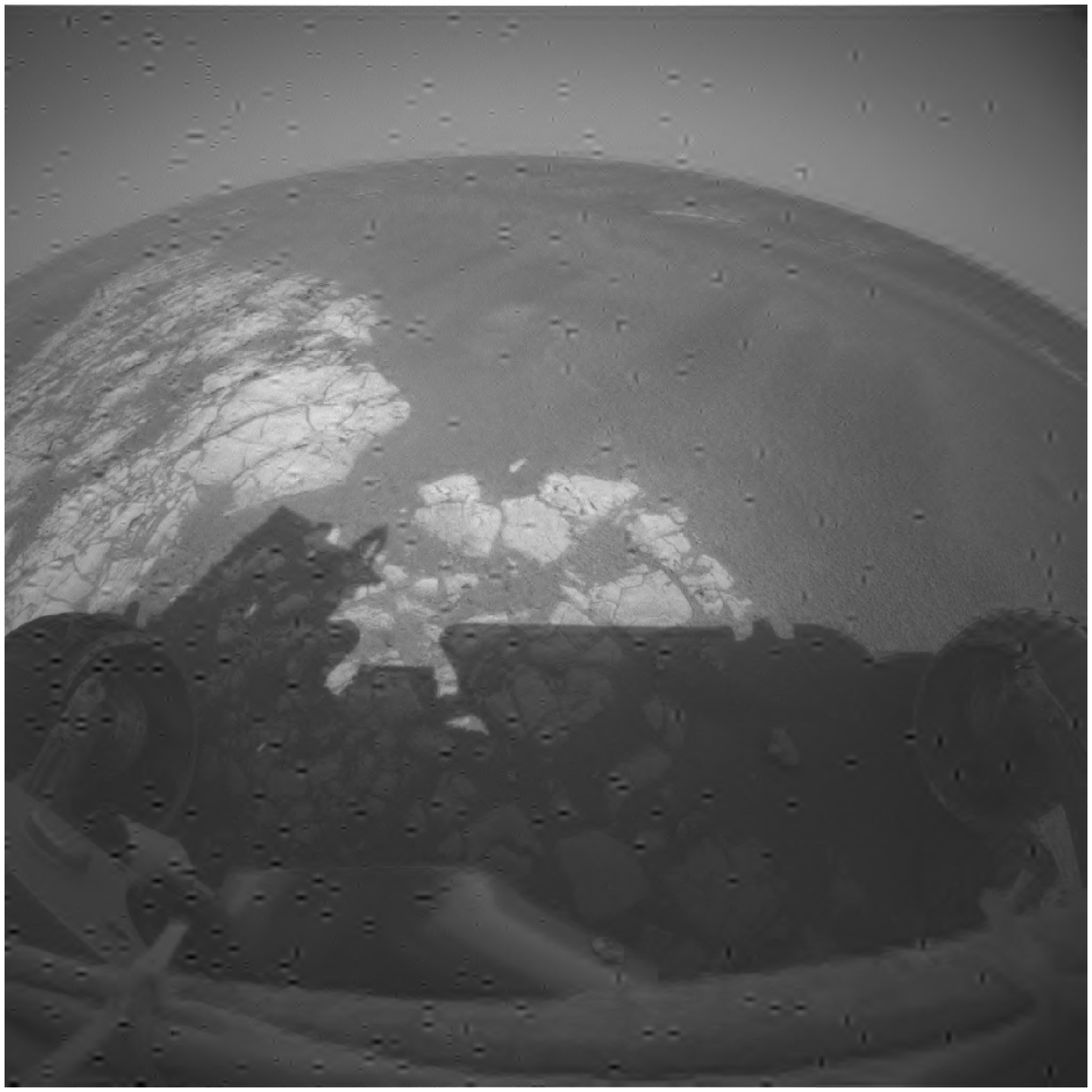}}
\caption{ Original MER1 image (a), and different image reconstructions at $E_s/N_0 = 3$ dB, and PSNR $ = 49$ dB (b);
at $E_s/N_0 = 2.8$ dB, and PSNR $ = 48.19$ dB (c);
at $E_s/N_0 = 2.5$ dB, and PSNR $= 45.63$ dB (d); at $E_s/N_0 = 2$ dB and PSNR $ = 38.60$ dB (e). Notice that visible artifacts
due to residual channel errors can be seen only in figure (e).}
\label{fig:MER1}
\vspace*{-0.3 cm}
\end{figure}

\section{Conclusions}

We proposed a new coding scheme for digital image transmission over a discrete-time AWGN channel.
The scheme is based on the concatenation of a standard DWT, decomposing the image into blocks of subband coefficients,
and an embedded dead-zone quantizer that produces sequences of ternary quantization indices for the successive refinement ``planes'' of
each subband. Then, the redundant symbol planes are mapped linearly into channel codewords, which are modulated
into constellation symbols and sent over the discrete-time AWGN channel.
We showed that if the quantization indices symbol alphabet additive group structure is matched to the
signal constellation generating group structure, the modulation mapping is an isometric labeling, and the source-channel encoder is linear,
then the source-channel coding problem is equivalent to a channel coding problem over a composite two-block channel, where the first
block corresponds to the transmission over a discrete additive noise channel with noise statistics identical to the source
statistics of the original source-channel coding problem, and the second block is the AWGN channel with the isometric labeling included as part of the channel. This equivalence holds for both the optimal MAP decoder and the suboptimal, low-complexity, BP decoder.
This allows us to optimize the source-channel coding ensemble as if it was a channel coding ensemble for the equivalent channel.
In particular, we propose to use Raptor codes over $\GF(4)$, since the additive group of $\GF(4)$ is naturally matched to the QPSK constellation
generating group, and Raptor codes provide the necessary rate flexibility to adapt the system to the variations of the
source entropy rate, which may vary significantly depending on the symbol plane, the subband, and the image to be encoded.

The linear mapping from source to channel symbols allows to avoid the use of a conventional entropy coding stage, as in conventional baseline systems, and
this is expected to mitigate the catastrophic error propagation which affects conventional schemes in the presence of channel decoding residual errors.
The proposed JSCC scheme is able to achieve pure image compression performance almost identical to the state-of-the art.
While the proposed system for finite block length and transmission on the AWGN channel yields lightly worse bandwidth efficiency performance than the highly optimized
baseline system used by JPL in deep-space missions,  the results of Sec.~\ref{sec:Results} show that, as expected,
the new scheme has much improved robustness against  mismatched channel SNR conditions. While the baseline system requires the retransmission of a whole segment
in the presence of even one frame with residual post-decoding errors, the new scheme yields perceptual good image reconstruction quality
for SNR mismatch up to 1 dB below its nominal value, without any retransmission.

\appendix

\subsection{Isomorphism}\label{sec:app-iso}

In Sec. \ref{sec:code-design}, an isomorphism between JSCC and channel coding over the two-block composite channel has been shown under MAP decoding. In this section, the isomorphism is established under BP in the sense that at every iteration of the decoder, the set of messages generated by the message-passing BP decoder
for the source-channel coding scheme can be mapped into the corresponding set of messages generated by the message-passing BP decoder
for the associated channel coding scheme by a probability-preserving mapping. 

To this end, BP equations for a systematic Raptor code over $\GF(q)$ with $K$ input symbols and $K+n$ output symbols are given similar to \cite{Bennatan-nonbinary-journal}. The Tanner graph corresponding to the systematic Raptor code is given in Fig. (\ref{fig:DecRap2}). Let $\Gm = [\textbf{I};\Hm]$ denote the encoding matrix of the linear code formed by the Tanner graph of the systematic Raptor code. Then the codeword vector $\dv = [\uv \; \cv] = \uv\Hm$ has length $K+n$, where $\uv$ is the message vector.

Let us consider the $l^{\rm th}$ iteration of the BP decoder. BP messages can be represented in both probability and LLR domain as discussed earlier. In order to prove isomorphism, next we work in the probability domain where the messages are probability mass function (pmf) vectors of size $q$ with the following notation:

\begin{itemize}
\item $^{(l)}\mv_{\rm v , o}$ and $^{(l)}\mv_{\rm o, v}$
are the messages passed from the ${\rm v}^{\rm th}$ input node to the ${\rm o}^{\rm th}$ output node
and from the ${\rm o}^{\rm th}$ output node to the ${\rm v}^{\rm th}$ input node,
respectively, of the LT-decoder;
\item $^{(l)}\mv_{\rm v , c }$ and $^{(l)}\mv_{\rm c , v}$
are the messages passed from the ${\rm v}^{\rm th}$ variable node to the ${\rm c}^{\rm th}$ check node
and from the ${\rm c}^{\rm th}$ check node to the ${\rm v}^{\rm th}$ variable node, respectively, of
the LDPC decoder;
\item ${ \deltav}_{\rm ldpc}^{(l),{\rm v}}$ is the message generated from the ${\rm v}^{\rm th}$
LDPC variable node and passed to the corresponding input node of
the LT-decoder;
\item ${ \deltav}_{\rm lt}^{(l),{\rm v}}$ is the message generated from the ${\rm v}^{\rm th}$ LT input node
and passed to the corresponding variable node of the LDPC decoder; and
\item $\tv_{\rm o}$  is the input message to the BP decoder at the ${\rm o}^{\rm th}$ output node. This can be either the a-priori source probability or the posterior symbol-by-symbol probability given the channel outputs.

In case of a joint source-channel coding scheme the a-priori information on the first $K$ symbols are obtained from source statistics. Assume the source symbols, $u_{\rm v}$'s are i.i.d. selected from $\GF(q)$ with $Pr\{u_{\rm v} = g\} = P_U(g)$ for all $g\in GF(q)$ and $1\leq {\rm v}\leq K$. Let $\textbf{P}_U\eqdef [P_U(0),\ldots,P_U(q-1)] $ be a vector of size $q$ representing the pmf vector for the source distribution. Since the linear encoder is systematic

 \begin{equation}\label{eq:sourceZ}
\tv_{\rm o}= \textbf{P}_U.
\end{equation}

On the other hand, for the two-block composite channel scheme, the $\tv_{\rm o}$ for the first $K$ symbols are calculated using the discrete channel transition probability of the $\GF(q)$ additive noise channel. In this model, the transmitted vector $\xv$ is equal to the codeword $\dv$ and the received vector is $\yv = \xv-\zv$ where the operation is in $\GF(q)$. The noise symbols $z_{\rm o}$'s are i.i.d. selected from $\GF(q)$ with $Pr\{z_{\rm o} = g\} = P_U(g)$ for all $g\in GF(q)$ and $1\leq {\rm o}\leq K$ where the noise distribution is the same as the source distribution $\Pm_U$. Then,

\begin{equation}\label{eq:channelZ_GF}
t_{{\rm o},k}=Pr\{y_{\rm o}|x_{\rm o}=k\}=P_U(y_{\rm o}-k)=P_U(d_{\rm o}-z_{\rm o}-k).
\end{equation}

For both the JSCC scheme and the composite block channel scheme, the codeword symbols $d_{\rm o}$ for $K+1\leq o\leq K+n$ are first mapped to points in the constellation signal set $\mathfrak{X}$ by $\mu(c_{\rm o})$. Hence $y_{\rm o} = \mu(d_{\rm o})+z_{\rm o}$ is received at ${\rm o}^{\rm th}$ output node where $z_{\rm o}\in \Cc\Nc(0,N_0)$. Let the pdf of the  complex circularly symmetric AWGN noise is denoted by $f_z$. Then
\begin{eqnarray}
t_{{\rm o},k}&=&Pr\{y_{\rm o}|\mu(d_{\rm o})=\mu(k)\}\nonumber\\
\label{eq:t_awgn}
&=&\frac{f_z\left(||\mu(d_{\rm o})+z_{\rm o}-\mu(k)||\right)}{\sum_{k'}f_z\left(||\mu(d_{\rm o})+z_{\rm o}-\mu(k')||\right)}
\end{eqnarray}

Note that in AWGN initial channel message depends only on the distance between the observed vector and the hypotheses vector, hence we first focus on the calculation of the distance term, $||\mu(d_{\rm o})+z_{\rm o}-\mu(k)||$.
\begin{eqnarray*}
||\mu(d_{\rm o})+z_{\rm o}-\mu(k)||&=&||w_{d_{\rm o}}\left(\mu(0)\right)+z_{\rm o}-\mu(k)||\\
&=&||\mu(0)+w_{-d_{\rm o}}\left(z_{\rm o}\right)-w_{-d_{\rm o}}\left(\mu(k)\right)||\\
&=&||\mu(0)+w_{-d_{\rm o}}\left(z_{\rm o}\right)-\mu\left(k-d_{\rm o}\right)||.
\end{eqnarray*}
Substituting back into (\ref{eq:t_awgn}), we obtain
\begin{equation}
\label{eq:channelZ_AWGN}
t_{{\rm o},k}=\frac{f_z\left(||\mu(0)+w_{-d_{\rm o}}\left(z_{\rm o}\right)-\mu\left(k-d_{\rm o}\right)||\right)}{\sum_{k'}f_z\left(||\mu(0)+w_{-d_{\rm o}}\left(z_{\rm o}\right)-\mu\left(k'-d_{\rm o}\right)||\right)}.
\end{equation}

Next we investigate the relationship between input vectors $\tv$ of two different scenarios which has been already discussed in Sec. \ref{sec:code-design} to be isomorphic to each other. For convenience, we name the two-block composite channel as Scheme A and JSCC as Scheme B. 

Scheme B: This case corresponds to the joint source-channel coding problem where the source vector $\uv$ whose pmf is given by $\Pm_U$ is encoded by $\Gm$ and $\dv = \uv\Gm$ and the non-systematic part of the codeword is transmitted through the AWGN channel where additive noise vector is $\zv$. Then the $_B\tv_{{\rm o}}$ for scheme B is given directly using (\ref{eq:sourceZ}) and (\ref{eq:channelZ_AWGN}):
\begin{eqnarray}
\label{eq:ZB}
_Bt_{{\rm o},k}&=&\left\{
              \begin{array}{ll}
                P_U(k), & \hbox{if}\; 1\leq \rm o\leq K\\
                \frac{f_z\left(||\mu(0)+w_{-d_{\rm o}}\left(z_{\rm o}\right)-\mu\left(k-d_{\rm o}\right)||\right)}{\sum_{k'}f_z\left(||\mu(0)+w_{-d_{\rm o}}\left(z_{\rm o}\right)-\mu\left(k'-d_{\rm o}\right)||\right)}, & \hbox{otherwise.}
              \end{array}
            \right.
\end{eqnarray}

Scheme A: This case corresponds to the composite-channel coding problem where all zero codeword ($\dv = \zerov$) is transmitted through the composite channel. For the additive $\GF(q)$ channel we pick the noise pmf as $\Pm_U$. The noise realization for this channel is taken as $\uv$ (the same as the source vector $\uv$ used in Scheme B) and for the AWGN part the noise vector components are $w_{-d_{\rm o}}(z_{\rm o})$ for the ${\rm o}^{\rm th}$ output for $K+1\leq \rm o \leq K+n$. Notice that pdf of the complex circularly symmetric AWGN noise in Scheme B and Scheme A are the same since, the transformation is either a rotation or a reflection.

Then the $_At_{{\rm o},k}$ for scheme A is given as follows directly using (\ref{eq:channelZ_GF}) and (\ref{eq:channelZ_AWGN}), 
\begin{eqnarray}
\label{eq:ZA}
_At_{{\rm o},k}&=&\left\{
              \begin{array}{ll}
                P_U(u_{\rm o}+k), & \hbox{if}\; 1\leq \rm o\leq K\\
                \frac{f_{Z}\left(||\mu(0)+w_{-d_{\rm o}}(z_{\rm o})-\mu(k)||\right)}{\sum_{k'}f_{Z}\left(||\mu(0)+w_{-d_{\rm o}}(z_{\rm o})-\mu(k')||\right)}, & \hbox{otherwise.}
              \end{array}
            \right.
\end{eqnarray}

Next, we want to relate $_A\tv$ and $_B\tv$ in order to derive the relationship between BP messages in schemes A and B in Theorem \ref{theorem:Isomorphism_LT}. To this end, we define a shift operation $^+$ on pmf vectors as follows \cite{Bennatan-nonbinary-journal}: Let $g$ be an element of $\GF(q)$ and $\mv =[m_0,m_1,\ldots,m_{q-1}]$ be a pmf vector of size $q$ where the indices $i= 0,\ldots, q-1$ of each vector component are also interpreted as elements of $\GF(q)$. Index $i$ denotes the $i^{\rm th}$ element of $\GF(q)$ given some enumeration of the field elements where indices $0$ and $1$ are reserved for the zero and one elements of the field, respectively. Then $\mv^{+g} \eqdef [m_g,m_{g+1},\ldots,m_{q-1+g}]$  where summation is in the field. Then comparing (\ref{eq:ZB}) and (\ref{eq:ZA}), it is immediate to see that
\begin{equation}\label{eq:condition}
_A\tv^{-d_{\rm o}} = _B\tv,\;  1\leq \rm o \leq K+n.
\end{equation}
\end{itemize}
In the following theorem, we will prove that for any two schemes where $\tv$'s are related by (\ref{eq:condition}), the BP messages are related by (\ref{eqtn:theorem-iso_vo}) and (\ref{eqtn:theorem-iso_ov}).

{\begin{theorem}
\label{theorem:Isomorphism_LT}
\textit{Assume the input probability message vectors of two different schemes, A and B are related as follows:
\begin{equation}\label{eq:SchemeAB_relation}
_A\tv_{\rm o}^{-d_{\rm o}}= _B\tv_{\rm o},
\end{equation} where $d_{\rm o}$ is the value of the transmitted codeword $\dv=\uv \Gm$ for scheme B at the ${\rm o}^{\rm th}$ location. Then at any round $l$, the relationship between the messages passed in schemes A and B are as follows:
\begin{eqnarray}
\label{eqtn:theorem-iso_vo}
^{(l)}\bv_{{\rm ov}}& =& ^{(l)}\av_{{\rm ov}}^{-u_{\rm v}}\\
\label{eqtn:theorem-iso_ov}
^{(l+1)}\bv_{{\rm vo}}&=&^{(l+1)}\av_{{\rm vo}}^{-u_{\rm v}},
\end{eqnarray}
(where $\av$ is used to denote messages for Scheme-A and $\bv$ is used for Scheme-B) and $u_{\rm v}$ is the value of the ${\rm v}^{th}$ variable node at scheme B.}
\end{theorem}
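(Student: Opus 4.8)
The plan is to prove the two identities simultaneously by induction on the decoding round $l$, exploiting the fact that the Raptor Tanner graph of Fig.~\ref{fig:DecRap2} is the factor graph of a \emph{linear} code over $\GF(q)$, so that every local update is of exactly one of two types: a variable-node step (a normalized pointwise product of incoming pmf vectors) or a check-type/output-node step (a convolution enforcing a $\GF(q)$-linear constraint). The base case is precisely the hypothesis of the theorem: at initialization the only messages are the input vectors $\tv_{\rm o}$, which satisfy ${}_A\tv_{\rm o}^{-d_{\rm o}}={}_B\tv_{\rm o}$ by (\ref{eq:SchemeAB_relation}) (equivalently, by comparing (\ref{eq:ZA}) and (\ref{eq:ZB})). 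The shift $\mv\mapsto\mv^{+g}$ is merely a translation of the field-element labels, and the whole content of the argument is that \emph{both} update types are equivariant under such translations, so that the global offset by the true scheme-B configuration $\{u_{\rm v}\},\{d_{\rm o}\}$ propagates coherently along every edge.

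First I would treat the input-to-output (variable-node) step, which derives (\ref{eqtn:theorem-iso_ov}) from (\ref{eqtn:theorem-iso_vo}). At input node ${\rm v}$ the outgoing message ${}^{(l+1)}\mv_{\rm vo}$ is, up to normalization, the pointwise product of the messages ${}^{(l)}\mv_{{\rm o}'{\rm v}}$ arriving from the other incident output nodes ${\rm o}'\neq{\rm o}$ together with the message $\deltav_{\rm ldpc}^{(l),{\rm v}}$ coming from the LDPC layer. Since the shift operation commutes with pointwise multiplication, $\prod_i\mv_i^{+g}=\left(\prod_i\mv_i\right)^{+g}$, and since by the inductive hypothesis every incoming factor in scheme B equals the corresponding scheme-A factor shifted by the \emph{same} offset $-u_{\rm v}$, the product is itself shifted by $-u_{\rm v}$, and normalization is unaffected. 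This yields (\ref{eqtn:theorem-iso_ov}) at once, \emph{provided} the LDPC-to-input message also carries the offset $-u_{\rm v}$, a point I return to below.

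The delicate step, and the one I expect to be the main obstacle, is the output-to-input (check-type) update establishing (\ref{eqtn:theorem-iso_vo}). Here output node ${\rm o}$ enforces the LT relation $d_{\rm o}=\sum_{\rm v}g_{\rm ov}u_{\rm v}$ over $\GF(q)$, so that ${}^{(l+1)}m_{\rm ov}(a)$ is a convolution marginalizing $\tv_{\rm o}$ against the incoming messages $\{{}^{(l)}\mv_{{\rm v}'{\rm o}}\}_{{\rm v}'\neq{\rm v}}$ subject to this constraint. I would substitute the inductive relations ${}_Bt_{{\rm o},d}={}_At_{{\rm o},d-d_{\rm o}}$ and ${}_Bm_{{\rm v}'{\rm o}}(s)={}_Am_{{\rm v}'{\rm o}}(s-u_{{\rm v}'})$ and then change summation variables $d\mapsto d-d_{\rm o}$ and $s_{{\rm v}'}\mapsto s_{{\rm v}'}-u_{{\rm v}'}$. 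Because the true configuration satisfies the same constraint, $d_{\rm o}=g_{\rm ov}u_{\rm v}+\sum_{{\rm v}'\neq{\rm v}}g_{{\rm ov}'}u_{{\rm v}'}$, the indicator transforms from $d=g_{\rm ov}a+\sum_{{\rm v}'}g_{{\rm ov}'}s_{{\rm v}'}$ into $d'=g_{\rm ov}(a-u_{\rm v})+\sum_{{\rm v}'}g_{{\rm ov}'}s'_{{\rm v}'}$, i.e. all offsets collapse into a single shift of the free argument $a$ by $-u_{\rm v}$. Reading off the result gives ${}_Bm_{\rm ov}(a)={}_Am_{\rm ov}(a-u_{\rm v})$, which is (\ref{eqtn:theorem-iso_vo}). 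The crux is checking that the offset $-d_{\rm o}$ on $\tv_{\rm o}$ cancels \emph{exactly} against the neighbour offsets via the code linearity; the existence of that offset on the AWGN block is exactly where geometric uniformity and the isometric labeling entered, through the form of $\tv_{\rm o}$ in (\ref{eq:channelZ_AWGN}), and it is consumed here, leaving the inductive step otherwise purely algebraic.

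Finally, to close the induction I would observe that the LDPC half of the graph is governed by the very same two mechanisms: its check nodes are convolutional (linear) updates identical in form to the output-node step, and its variable nodes are product updates identical to the input-node step. Hence the inter-layer messages $\deltav_{\rm ldpc}^{(l),{\rm v}}$ and $\deltav_{\rm lt}^{(l),{\rm v}}$ inherit the offset $-u_{\rm v}$ of the shared node ${\rm v}$, supplying the missing ingredient flagged in the variable-node step and making the induction self-consistent across both layers. The base case together with the two equivariance steps then delivers (\ref{eqtn:theorem-iso_vo}) and (\ref{eqtn:theorem-iso_ov}) for all $l$.
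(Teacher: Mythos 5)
Your proposal is correct and follows essentially the same route as the paper's proof: induction on the decoding round, a trivial base case at initialization, shift-equivariance of the pointwise-product (variable-node) update, and collapse of all offsets into a single shift by $-u_{\rm v}$ at the output/check nodes because the true scheme-B configuration satisfies every local linear constraint ($d_{\rm o}=\sum_{{\rm v}'}u_{{\rm v}'}g_{{\rm v}'{\rm o}}$). The only cosmetic difference is that the paper executes the check-type step in the DFT domain via the $\Gamma(\cdot)$ function rather than directly on the convolution, and it likewise defers the LDPC-layer message relation to ``the same argument,'' exactly as you do.
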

}

Before proving the theorem, next we introduce some useful notation and functions that will be helpful to manipulate probability domain BP equations. The BP equations in the probability domain can be written in a compact way using vector shift operations and Discrete Fourier Transform (DFT) as done in \cite{Bennatan-nonbinary-journal}. One of these shift operations is $^\times$ of \cite{Bennatan-nonbinary-journal} which is similar to the previously defined $^+$ where summation is changed with multiplication in the field, i.e. $\mv^{\times g} \eqdef [m_0,m_{g},m_{2g},\ldots,m_{(q-1)g}]$. The following properties from \cite{Bennatan-nonbinary-journal} are useful for our derivations:
\begin{eqnarray*}
\left(\mv^{+g}\right)^{-g}&=&\mv,\;\mbox{and}\;\left(\mv^{\times g}\right)^{\times g^{-1}}=\mv,\;\;\mbox{if} \; g\neq 0\nonumber\\
\label{eq:plus_times}
\left(\mv^{+i}\right)^{\times g}&=&\left(\mv^{\times g}\right)^{+ig^{-1}}\\
\label{eq:times_plus}
\left(\mv^{\times g}\right)^{+ i}&=&\left(\mv^{+ gi}\right)^{\times g}
\end{eqnarray*}

In the following we use $\ev^T\dv$ to denote the scalar product of two vectors while $\ev\cdot\dv$ denotes componentwise multiplication which results in a vector. Similarly $\prodc$ denotes componentwise multiplications of multiple vectors. Additive vector representation \footnote{Note that finite fields exist for values of $q$ equal to $\p^\rr$ where $\p$ is a prime number and $\rr$ is a positive integer. Each element of $\GF(\p^\rr)$ can be represented as an $\rr$-dimensional vector over $\{0,\ldots,\p-1\}^\rr$. The sum of two $\GF(\p^\rr)$ elements corresponds to the sum of the vectors, evaluated as the modulo-$\p$ sum of vector components. This is called the additive vector-space representation.} of $g$ is denoted by the $\rr$-dimensional vector $\gu$. $\rr$-dimensional DFT and IDFT operations \cite{Bennatan-nonbinary-journal}, \cite{multidim_SP} for vectors of size $q = \p^\rr$ is described using the DFT pair $\fv, \mbox{DFT}(\fv)=\dv$:
\begin{eqnarray*}\label{eq:p^r_DFT}
d_{g} &=& \sum_{\hu\in \{0,\ldots,\p-1\}^\rr}
f_{h}e^{j(2\pi/\p)\hu^T\gu},\;\; 0\leq g\leq q-1,\\
\label{eq:p^r_IDFT}
f_{h}  &=&\frac{1}{q}\sum_{\gu\in \{0,\ldots,\p-1\}^r}
d_{g}e^{-j(2\pi/\p)\hu^T\gu},\;\; 0\leq h\leq q-1.
\end{eqnarray*}

We also define a new function $\Gamma(\cdot)$ and prove its properties which will be useful later in the proof of the theorem. Let $\Gamma(\cdot)$ be a function from $\GF(q)$ to complex vectors of size $q$ where the $h^{\rm th}$ component of the resultant vector is given as follows:

\begin{equation}\label{eq:gamma_def}
\left[\Gamma(g)\right]_h \eqdef e^{-\frac{j2\pi}{\p}\gu^T\hu}, \; 0\leq h\leq q-1.
\end{equation}

Then it is easy to show the following properties of $\Gamma(\cdot)$ function:
\begin{eqnarray}\label{eq:gamma_P1}
\mbox{DFT}\left(\ev^{+g}\right) &=& \mbox{DFT}\left(\ev\right)\cdot\Gamma(g)\\
\label{eq:gamma_P2}
\left[\mbox{IDFT}(\dv)\right]^{+g}&=&\mbox{IDFT}\left(\dv\cdot \Gamma(g)\right)\\
\label{eq:gamma_P3}
\Gamma(g)\Gamma(h) &=& \Gamma(g+h)
\end{eqnarray}

We let $g_{\rm ov} = g_{\rm vo}\in\GF(q)$ denote the value of the edge between the nodes $\rm o-\rm v$ and define $\N(\rm v)$  as the set of output $\rm o'$ nodes adjacent to the node $\rm v$. Note that we use the same $\N(\rm v)$ notation to denote the set of check nodes adjacent to $\rm v$ for LDPC part equations. It will be self-evident from the equations which set is considered. Similar neighbor notation $\N(\cdot)$ is also used for other node types. Using the notation above, the updating rules for
the LT and the LDPC decoders for the $l^{\rm th}$ iteration are given as follows:
\begin{eqnarray}
\label{eq:m_ov}
^{(l)}\mv_{{\rm ov}}^{\times -g_{{\rm ov}}^{-1}} & = & \mbox{IDFT}\left(\prodc_{{\rm v'\in \N(\rm o): \rm v'\neq \rm v}}\mbox{DFT}\left(^{(l)}\mv_{{\rm v'o}}^{\times g_{{\rm v'o}}^{-1}}\right)\cdot\mbox{DFT}\left(\tv_{\rm o}^{\times (-1)^{-1}}\right)\right),\\
\label{eq:m_vo}
^{(l+1)}\mv_{{\rm vo}} & =&  \left\{\begin{array}{ll}
                    \left(1/q\right)\onev, & \hbox{if}\; l=0; \\
                  \Uc\left[\deltav_{{\rm ldpc}}^{(l),{\rm v}}\cdot \prodc_{{\rm o'\in \N(\rm v): \rm o' \neq \rm o}}\,^{(l)}\mv_{{\rm o'v}}\right] &\hbox{otherwise},
                  \end{array}
                \right.
\end{eqnarray}
where the operator $\Uc$ normalizes the vector, in other words $\Uc\left[\mv\right] = \mv/\left(\onev^T\mv\right)$.
\begin{eqnarray*}
\label{eq:m_vc}
^{(l)}\mv_{{\rm vc}}& =&\left\{
                  \begin{array}{ll}
                    \left(1/q\right)\onev, & \hbox{if}\; l=0; \\
\Uc\left[\deltav_{{\rm lt}}^{(l),{\rm v}}\cdot\prodc_{{\rm c'\in\N(v): c'\neq c}}\,^{(l-1)}\mv_{{\rm c'v}}\right] &\hbox{otherwise,}
                                   \end{array}
                \right.\\
\label{eq:m_cv}
^{(l)}\mv_{{\rm cv}}^{\times -g_{{\rm cv}}^{-1}}  &= & \mbox{IDFT}\left(\prodc_{{\rm v'\in \N(\rm c): \rm v'\neq \rm v}}\mbox{DFT}\left(^{(l)}\mv_{{\rm v'c}}^{\times g_{{\rm v'c}}^{-1}}\right)\right).
\end{eqnarray*}
The messages $\deltav_{\rm lt}^{(l),{\rm v}}$ and $\deltav_{\rm
ldpc}^{(l),{\rm v}}$ passed from the LT to the LDPC decoder and from the
LDPC to the LT-decoder respectively are defined by:
\begin{equation*}
\label{eqtn:LT2LDPC} \deltav_{\rm lt}^{(l),{\rm v}}= \Uc\left[\prodc_{ {\rm o\in\N(v)}}\,^{(l)}\mv_{{\rm ov}}\right],\;
\deltav_{\rm ldpc}^{(l),\rm v}=\Uc\left[\prodc_{{\rm c\in \N(v)}}\,^{(l)}\mv_{{\rm cv}}\right] .
\end{equation*}

{\begin{proof}
The proof is based on induction on $(l)$. The relationship between the messages corresponding to different schemes for the $0^{\rm th}$ round, is verified first. Then round $(l+1)$ will be proven assuming the hypotheses for $l^{\rm th}$ round.

Round 0:

Using (\ref{eq:m_ov}-\ref{eq:m_cv}) it is immediate to see that
\begin{equation*}
^{(0)}\av_{{\rm ov}} = ^{(0)}\bv_{{\rm ov}} = ^{(1)}\av_{{\rm vo}} = ^{(1)}\bv_{{\rm vo}} =\left(1/q\right)\onev.
\end{equation*} Hence (\ref{eqtn:theorem-iso_vo}) and (\ref{eqtn:theorem-iso_ov}) are verified for $l = 0$. Now let's assume the theorem is true for $l^{\rm th}$ round, and prove it for $(l+1)^{\rm th}$ round using (\ref{eqtn:theorem-iso_vo}) and (\ref{eqtn:theorem-iso_ov}).

Round $l+1$:
\begin{eqnarray}
&&^{(l+1)}\bv_{{\rm ov}}^{\times -g_{{\rm ov}}^{-1}}  =  \mbox{IDFT}\left(\prodc_{{\rm v'\in \N(o): v'\neq v}}\mbox{DFT}\left(^{(l+1)}\bv_{{\rm v'o}}^{\times g_{{\rm v'o}}^{-1}}\right)\cdot\mbox{DFT}\left(_B\tv_{\rm o}^{\times (-1)^{-1}}\right)\right)\nonumber\\
&& \Ea  \mbox{IDFT}\left(\prodc_{{\rm v'\in \N(o): v'\neq v}}\mbox{DFT}\left(\left(^{(l+1)}\av_{{\rm v'o}}^{u_{\rm -v'}}\right)^{\times g_{{\rm v'o}}^{-1}}\right)\cdot\mbox{DFT}\left(\left(_A\tv_{\rm o}^{-d_{\rm o}}\right)^{\times (-1)^{-1}}\right)\right)\nonumber\\
&& \Eb  \mbox{IDFT}\left(\prodc_{{\rm v'\in \N(o): v'\neq v}}\mbox{DFT}\left(\left(^{(l+1)}\av_{{\rm v'o}}^{\times g_{{\rm v'o}}^{-1}}\right)^{-u_{\rm v'}g_{{\rm v'o}}}\right)\cdot\mbox{DFT}\left(\left(_A\tv_{\rm o}^{\times (-1)^{-1}}\right)^{-d_{\rm o} (-1)}\right)\right)\nonumber\\
&& \Ecc  \mbox{IDFT}\left(\prodc_{{\rm v'\in \N(o): v'\neq v}}\mbox{DFT}\left(^{(l+1)}\av_{{\rm v'o}}^{\times g_{{\rm v'o}}^{-1}}\right)\cdot\mbox{DFT}\left(_A\tv_{\rm o}^{\times (-1)^{-1}}\right) \cdot\Gamma(d_{\rm o})\cdot\prodc_{{\rm v'\in \N(o): v'\neq v}}\Gamma({-u_{\rm v}'g_{{\rm v'o}}})\right)\nonumber\\
&& \Ed  \mbox{IDFT}\left(\prodc_{{\rm v'\in \N(o): v'\neq v}}\mbox{DFT}\left(^{(l+1)}\av_{{\rm v'o}}^{\times g_{{\rm v'o}}^{-1}}\right)\cdot\mbox{DFT}\left(_A\tv_{\rm o}^{\times (-1)^{-1}}\right) \cdot\Gamma\left(d_{\rm o}+\sum_{{\rm v'\in \N(o): v'\neq v}}{-u_{\rm v}'g_{{\rm v'o}}}\right)\right)\nonumber\\
&& \Ee  \mbox{IDFT}\left(\prodc_{{\rm v'\in \N(o): v'\neq v}}\mbox{DFT}\left(^{(l+1)}\av_{{\rm v'o}}^{\times g_{{\rm v'o}}^{-1}}\right)\cdot\mbox{DFT}\left(_A\tv_{\rm o}^{\times (-1)^{-1}}\right) \cdot\Gamma\left(u_{\rm v}g_{\rm vo}\right)\right)\nonumber\\
&& \Ef  \mbox{IDFT}\left(\prodc_{{\rm v'\in \N(o): v'\neq v}}\mbox{DFT}\left(^{(l+1)}\av_{\rm v'o}^{\times g_{\rm v'o}^{-1}}\right)\cdot\mbox{DFT}\left(_A\tv_{\rm o}^{\times (-1)^{-1}}\right)\right)^{+u_{\rm v}g_{\rm vo}}\nonumber\\
&&\Eg\left(^{(l+1)}\av_{\rm ov}^{\times -g_{\rm ov}^{-1}} \right)^{+u_{\rm v}g_{\rm vo}}\nonumber\\
&&\Eh\left(^{(l+1)}\av_{\rm ov}^{ -u_{\rm v}} \right)^{\times-g_{\rm ov}^{-1}}\nonumber\\
\label{eq:proof_ov}
&&^{(l+1)}\bv_{\rm ov}=^{(l+1)}\av_{\rm ov}^{ -u_{\rm v}}
\end{eqnarray}
where $(a)$ is due to $l^{\rm th}$ round assumption (\ref{eqtn:theorem-iso_ov}); $(b), (h)$ are due to (\ref{eq:plus_times}), (\ref{eq:times_plus}). Properties of the $\Gamma(.)$ function, namely (\ref{eq:gamma_P1}-\ref{eq:gamma_P3}) are used to derive steps $(f), (c)$ and $(d)$. Step (g) is simply the corresponding BP equation (\ref{eq:m_ov}) for Scheme A. Lastly $(e)$ is due to the check constraint at the $\rm o^{\rm th}$ output node.

Assume the relationship for LDPC part messages, namely $^{(l)}\bv_{\rm cv},^{(l)}\av_{\rm cv},^{(l+1)}\bv_{\rm vc},^{(l+1)}\av_{\rm vc}$ similar to (\ref{eqtn:theorem-iso_vo}), (\ref{eqtn:theorem-iso_ov}) is already given. Then (\ref{eq:ldpc_part_assumption}) can be directly verified. For space concerns the assumption on the LDPC messages is not proven since it can be actually easily done using a similar proof to the current one.
\begin{equation}\label{eq:ldpc_part_assumption}
_B\deltav^{(l+1),{\rm v}}_{{\rm ldpc}}=\left(_A\deltav^{(l+1),{\rm v}}_{{\rm ldpc}}\right)^{-u_{\rm v}}
\end{equation}

Then using (\ref{eq:m_vo}), we write:
\begin{eqnarray}
^{(l+2)}\bv_{{\rm vo}} & =& \Uc\left[_B\deltav_{{\rm ldpc}}^{(l+1),{\rm v}}\cdot\prodc_{\rm o' \in \N(v):o'\neq o}\,^{(l+1)}\bv_{{\rm o'v}}\right],\nonumber\\
& =& \Uc\left[\left(_A\deltav_{{\rm ldpc}}^{(l+1),{\rm v}}\right)^{-u_{\rm v}}\cdot\prodc_{\rm o'\neq o}\,\left(^{(l+1)}\av_{{\rm o'v}}\right)^{-u_{\rm v}}\right],\nonumber\\
\label{eq:proof_vo}
^{(l+2)}\bv_{\rm vo} &=&^{(l+2)}\av_{\rm vo}^{-u_{\rm v}}.
\end{eqnarray}
(\ref{eq:proof_ov}) and (\ref{eq:proof_vo}) completes the proof.
\end{proof}}

 Due to (\ref{eq:condition}), according to Theorem (\ref{theorem:Isomorphism_LT}), the BP messages of Scheme A and Scheme B can be obtained from each other using the operation $^+$. Hence it can be easily seen that the error probability of Scheme B with $\uv$ and $\zv$ is equal to the error probability of Scheme A with $\uv$ and $w_{-d_{\rm o}}(z_{\rm o})$. Since AWGN is isomorphic, the average error probability of Scheme B (JSCC) is equal to the average error probability of the Scheme A (composite two-block channel).

\subsection{Stability Condition}\label{sec:app-stability}

In this section we extend the stability condition of Etesami {\em et al.} \cite{EtSho06}
to LT codes over $\GF(4)$ and the two-blocks composite channel of Definition \ref{composite-channel}
with parameters $H$ and $C$.  Let $\Fm$ be the $4 \times 4$ DFT with elements $[\Fm]_{m,\ell} = e^{-j\pi (m-1)(\ell-1)/2}$ for
$m,\ell \in \{1,2,3,4\}$.
The BP messages for $q$-ary codes can be either represented as probability vectors  (of length $q$) or
as LLR  vectors of length $q-1$. Let $\LLR : \mv \mapsto \Lm$ denote the mapping of the probability representation into the LLR representation from the probability domain.
We define the mapping $\Phi : \RR^3 \rightarrow [0,1]^4$ given by
\[ \Phi(\Lm) \eqdef \Fm \; \LLR^{-1} ( \Lm). \]
Under the Gaussian approximation $\Lm \sim \Nc(\upsilon \onev,\Sigma_\upsilon)$ of  \cite{Bennatan-nonbinary-journal},
already used in Sec.~\ref{sec:EXIT}, we have
\begin{equation}\label{eq:E_Phi}
\mathbb{E}\left[\Phi(\Lm)\right] = \left[1,\;\Psi(\upsilon),\;\Psi(\upsilon),\;\Psi(\upsilon)\right],
\end{equation}
with
\begin{equation}\label{eq:psi_nu}
\Psi(\upsilon)\eqdef \mathbb{E}\left[\frac{1-e^{-L_1}+e^{-L_2}-e^{-L_3}}{1+e^{-L_1}+e^{-L_2}+e^{-L_3}}\right].
\end{equation}
Since the right hand side of (\ref{eq:E_Phi}) depends only on a scalar value,
when we consider the expectation of LLR values under Gaussian approximation, it will suffice to work with a ``scalar'' version of the expectation
operator, denoted by $\Es$. For example, $\Es\left[\Lm\right] = \upsilon$ is the short-hand notation for $\EE[\Lm] = \upsilon\onev$.
With this notation, we have  $\Es\left[\Phi(\Lm)\right] = \Psi(\upsilon) = \Psi\left(\Es\left[\Lm\right]\right)$.

The BP message updating equations in the LLR domain, assuming an output node o with neighborhood $\N({\rm o})$ of size
$|\N({\rm o})| = i$ and an input node v with neighborhood $\N({\rm v})$ of size $|\N({\rm v})|= j$, are given by
\begin{eqnarray*}\label{eq:LLR domain_with_phi}
\Phi\left(\Lm^{(l)}_{\rm o,v}\right) & = & \left[\prod_{\imath=1}^{i - 1} \Phi\left(\Lm^{(l)}_{{\rm v}_\imath,{\rm o}} \right)\right] \cdot \Phi(\tv_{\rm o}) \nonumber\\
\Lm^{(l+1)}_{\rm v,o} & =& \sum_{\jmath = 1}^{j-1}\,\Lm^{(l)}_{{\rm o}_\jmath, {\rm v}}.
\end{eqnarray*}
where $\tv_{\rm o}$ denotes the LLR of the channel output for node o, and $l$ denotes the BP iteration.

Following \cite{EtSho06},  we are interested in the evolution of the quantity $\Es\left[\Lm^{(l)}_{\rm o,v}\right]$ in a right neighborhood of $0$.
We can write
\begin{eqnarray*}
&&\Es\left[\Lm^{(l)}_{\rm o,v}\right]=\sum_i \omega_i \Es\left[\Lm^{(l)}_{\rm o,v}\mid |\N({\rm o})|=i \right]\\
&&=\sum_i \omega_i \Psi^{-1} \left ( \Psi\left(\Es\left[\Lm^{(l)}_{\rm o,v}\mid |\N({\rm o})|=i\right]\right)\right )\\
&&=\sum_i \omega_i \Psi^{-1}\left ( \Es\left[\Phi\left(\Lm^{(l)}_{\rm o,v}\mid |\N({\rm o})|=i\right)\right]\right )\\
&&\Ea\sum_i \omega_i \Psi^{-1}\Big\{\left[\gamma\Psi\left(J^{-1}(1-H)\right)+(1-\gamma)\Psi\left(J^{-1}(C)\right)\right]
\left[  \Es\left[\Phi\left(\Lm^{(l)}_{\rm v,o}\right)\right] \right ]^{i-1}\Big\}\\
&& = \sum_i \omega_i \Psi^{-1}\left ( \left[ \gamma \Psi\left(J^{-1}(1-H)\right) + (1-\gamma) \Psi\left(J^{-1}(C) \right)\right] \cdot
\left[ \sum_j \iota_j \Psi\left((j-1)\Es\left[\Lm^{(l-1)}_{\rm o,v} \right]\right) \right ]^{i-1}\right ),
\end{eqnarray*}
where in (a) we used the two-block composite channel property. For successful start of the decoding under the Gaussian approximation, the quantity
$\Es\left[\Lm^{(l)}_{\rm o,v}\right]$ must be strictly increasing from one iteration to the other in a sufficiently small right neighborhood
of zero. A necessary condition is that
\begin{eqnarray}
\upsilon & < &\sum_i \omega_i \Psi^{-1} \left ( \left [  \gamma\Psi\left(J^{-1}(1-H)\right)+(1-\gamma)\Psi\left(J^{-1}(C)\right) \right] \left [
\sum_j \iota_j \Psi\left((j-1) \upsilon \right)\right ]^{i-1} \right ) \label{stability1}
\end{eqnarray}
in a sufficiently small right neighborhood of $\upsilon = 0$. By taking derivative of both sides of (\ref{stability1}) with respect to $\upsilon$ at $0$ and using
$\Psi(0) = 0$ and $\Psi'(0) \neq 0$ (see at the end of this section), after some algebra we arrive at the stability condition:
\begin{equation}\label{eq:stability}
\Omega_2 \geq \Upsilon\left(\gamma, r_{\rm ldpc}, H,C\right) \eqdef \frac{\gamma/r_{\rm ldpc}}{2\left(\gamma \Psi\left(J^{-1}(1-H)\right)+(1-\gamma)\Psi\left(J^{-1}(C)\right)\right)}\;.
\end{equation}
%
%
\textbf{Calculation of $\Psi'(0)$.}
We rewrite $\Psi(\upsilon)$ in (\ref{eq:psi_nu}) using the zero mean Gaussian random vector $\mathbf{\Lc}\sim \Nc( \zerov,\Sigma_\upsilon)$ as
\begin{equation*}
\Psi(\upsilon)= \mathbb{E}\left[\frac{1-e^{-\Lc_1-\upsilon}+e^{-\Lc_2-\upsilon}-e^{-\Lc_3-\upsilon}}{1+e^{-\Lc_1-\upsilon}+e^{-\Lc_2-\upsilon}+e^{-\Lc_3-\upsilon}}\right].
\end{equation*}
Using a Taylor expansion in a neighborhood of $\upsilon = 0$, we obtain
\begin{eqnarray}
\label{eq:psi_nu_n1n2n3}
\Psi(\upsilon)&=&\sum_{n_1=0}^\infty\sum_{n_2=0}^\infty\sum_{n_3=0}^\infty \frac{\mathbb{E}\left[ \Lc_1^{n_1}\Lc_2^{n_2}\Lc_3^{n_3}\right]}{n_1!n_2!n_3!}h_{n_1,n_2,n_3}(\upsilon),
\end{eqnarray}
where $\eta\eqdef n_1+n_2+n_3$ and
\begin{equation}
\label{eq:h}
h_{n_1,n_2,n_3}(\upsilon) \eqdef \left . \frac{\partial^{\eta}\left(\frac{1-e^{-l_1-\upsilon}+e^{-l_2-\upsilon}-e^{-l_3-\upsilon}}{1+e^{-l_1-\upsilon}+e^{-l_2-\upsilon}+e^{-l_3-\upsilon}}\right)}{\partial l_1^{n_1}\partial l_2^{n_2}\partial l_3^{n_3}} \right |_{l_1=l_2=l_3=0}.
\end{equation}
Let $\bar \Lm$ be a zero mean Gaussian random vector of size $\eta$
with covariance matrix $\Kc = [\Kc_{i,j}]$, and let $\Xi_\eta$ denote the collection of all unordered sequences
of all unordered integer pairs, with each sequence containing each integer from $1$ to $\eta$ exactly once.
Then, the well-known formula for the higher moments of real Gaussian random variable yields,
\begin{eqnarray}
\label{eq:higher-gauss-moment}
\mathbb{E}\left[ \prod_{t=1}^\eta \bar{L}_t\right]&=&\left\{
                                                        \begin{array}{ll}
                                                          0, & \hbox{for $\eta$ odd} \\
                                                          \sum_{\left(i_1,j_1,\ldots,i_{\eta/2},j_{\eta/2}\right)\in \Xi_\eta} \prod_{s=1}^{\eta/2} \Kc_{i_s,j_s}, & \hbox{for $\eta$ even.}
                                                        \end{array}
                                                      \right.
\end{eqnarray}
Specializing the components of $\bar{\Lm}$ to be $\bar{L}_m = \Lc_1$ for $1\leq m \leq n_1$,
$\bar{L}_m = \Lc_2$ for $n_1+1\leq m \leq n_1+n_2$ and, $\bar{L}_m = \Lc_3$ for $n_1+n_2+1\leq m \leq \eta$,
using the form of $\Sigma_\upsilon$ as given in Sec. \ref{sec:code-design}, and using (\ref{eq:higher-gauss-moment}),
it is not difficult to see that $\mathbb{E}\left[ \Lc_1^{n_1}\Lc_2^{n_2}\Lc_3^{n_3}\right]\propto |\Xi_\eta| \upsilon^{\eta/2} $
for all even values of $\eta$, and zero for all odd values of $\eta$. Then, we have
\begin{eqnarray*}
\Psi(\upsilon) & = &\sum_{n_1,n_2,n_3\; : \; \eta \;\mbox{\tiny is even} } \frac{ |\Xi_\eta| \upsilon^{\eta/2}}{n_1!n_2!n_3!} h_{n_1,n_2,n_3}(\upsilon)\\
\Psi'(\upsilon)& = &\sum_{n_1,n_2,n_3\; : \; \eta \;\mbox{\tiny is even} } |\Xi_\eta|\frac{\frac{\partial h_{n_1,n_2,n_3}(\upsilon)}{\partial \upsilon}\upsilon^{\eta/2}+\frac{\eta}{2}\upsilon^{\eta/2-1}h_{n_1,n_2,n_3}(\upsilon)}{n_1!n_2!n_3!}
\end{eqnarray*}
For $\upsilon = 0$, the above summations includes the triplets $(n_1, n_2,n_3)$ with sum $\eta = n_1 + n_2 + n_3 = 2$. These are
$(2 0 0),(0 2 0),(0 0 2),(1 1 0),(0 1 1)$, and $(1 0 1)$. For any $(n_1,n_2,n_3)$ in this set, we have
$\frac{\mathbb{E}\left[ \Lc_1^{n_1}\Lc_2^{n_2}\Lc_3^{n_3}\right]}{n_1!n_2!n_3!} = \upsilon$, since the diagonal elements of
$\Sigma_\upsilon$ are equal to $2\upsilon$ and the off-diagonal elements are equal to $\upsilon$.
Then, we obtain
\begin{eqnarray*}
\Psi'(0)&= &\sum_{\eta=2} h_{n_1,n_2,n_3}(0)\\
&=&h_{2,0,0}(0)+h_{0,2,0}(0)+h_{0,0,2}(0)+h_{1,1,0}(0)+h_{0,1,1}(0)+h_{1,0,1}(0)\\
&=& -60/16,
\end{eqnarray*}
where, using (\ref{eq:h}), we have $h_{2,0,0}(0) = h_{0,0,2}(0) = -2,\;h_{1,1,0} = 0,\;h_{1,0,1} = 1/8,\;h_{0,1,1}(0) =  -1/16$ and
$h_{0,2,0}(0) = 3/16$.


\end{document}